\documentclass[%
 reprint,
 superscriptaddress,
 floatfix,
 amsmath, amssymb,
 aps,
 pra,
longbibliography]{revtex4-2}

\usepackage{enumerate}
\usepackage{graphicx}
\usepackage{quantikz}
\usepackage{dcolumn}
\usepackage{bm}
\usepackage[english]{babel}
\usepackage[unicode, colorlinks=true,linkcolor=blue,citecolor=blue,urlcolor=blue]{hyperref}
\usepackage{amsthm}
\usepackage{amsmath}
\usepackage[dvipsnames]{xcolor}
\usepackage{dsfont}

\usepackage{xcolor}
\usepackage{ragged2e}
\usepackage[linesnumbered,ruled,vlined,noline]{algorithm2e}
\usepackage[noend]{algpseudocode}
\usepackage{booktabs}

\usepackage{changes}

\newtheorem{theorem}{Theorem}

\newtheorem{corollary}{Corollary}

\newcommand{\tr}[1]{\ensuremath{\text{Tr}\left({#1}}\right)}

\definecolor{bluepigment}{rgb}{0.2, 0.2, 0.6}
\begin{document}

\preprint{APS/123-QED}

\title{Direct channel fidelity estimation through joint fiducial grouping}

\date{\today}
\author{Júlia Barberà-Rodríguez}
\affiliation{ICFO - Institut de Ciències Fotòniques, The Barcelona Institute of Science and Technology, 08860 Castelldefels, Barcelona, Spain}

\author{Arthur Strauss}
\affiliation{Centre for Quantum Technologies, National University of Singapore, 3 Science Drive 2, Singapore 117543}

\begin{abstract}
Fault-tolerant quantum computation hinges on the requirement for low physical error rates. Reaching below threshold regime requires the accounting of circuit dependent noise, that is inherent to the execution context in which a quantum gate is usually embedded. Direct fidelity estimation is a technique that offers natural context preservation as it solely requires the insertion of local Pauli preparation and measurement fiducials around the window of interest. However, each sampled input--output Pauli pair demands its own preparation and measurement setting, an overhead that grows rapidly once the target gate is no longer Clifford. We introduce joint fiducial grouping, which partitions Pauli pairs into sets with commuting input and output operators, allowing several Pauli-transfer coefficients to be estimated within the same preparation-measurement setting. We derive an unbiased grouped estimator and finite-sample guarantees showing that grouping always reduces the number of distinct input--output settings and can also reduce the required channel uses when the target weight is concentrated within compatible groups. We characterize these gains for the parametric two-qubit gate $\mathrm{fSim}(\theta,\varphi)$, and use the grouped estimator as a context-sensitive reward for reinforcement-learning-based gate calibration. Our results provide a practical route to lower-overhead, context-preserving fidelity estimation for continuously parameterized quantum gates.
\end{abstract}

\maketitle

\section{Introduction}
Quantum processors have recently reached a level of performance at which practical quantum advantage is beginning to emerge across a growing range of applications~\cite{kim_evidence_2023,abanin_observation_2025,hartnett_fast_2026}. Continued progress now depends on mitigating increasingly subtle sources of error. As gate operations approach error rates compatible with fault-tolerant quantum computing, the remaining imperfections become harder to characterize and exhibit a pronounced dependence on the surrounding circuit context. This includes neighboring operations, idle intervals, and control crosstalk~\cite{mckay2023benchmarkingquantumprocessorperformance}. 

Characterizing these effects requires a \emph{context-preserving} fidelity estimation protocol: one that estimates the fidelity of a quantum process exactly as it appears within an application circuit, rather than after averaging over artificially modified executions. Such context-preserving fidelity estimators are also becoming increasingly important as objective functions for adaptive calibration strategies, including machine-learning-based approaches, where the same circuit must be evaluated repeatedly under realistic operating conditions.
Randomized methods~\cite{knill2008rb,arute2019xeb,erhard_characterizing_2019} typically estimate average gate or layer fidelities by randomizing the implemented circuit through twirling operations~\cite{wallman2016noise, hashim2021randomized}. Although this averaging allows scalable fidelity estimation by turning coherent errors into stochastic ones, it also removes information about how a specific circuit shapes the noise acting on the system. Other approaches recover this circuit sensitivity through coherent amplification, but often require implementing the inverse of the circuit under study~\cite{debroy2023context}, creating a bootstrapping problem when the gates being characterized are themselves imperfect.

Direct fidelity estimation (DFE)~\cite{flammia2011direct, dasilva2011practical} provides an alternative approach to estimating the fidelity of a quantum gate relative to a known target operation. It requires only local Pauli-eigenstate preparations (preparation fiducials) before the channel and local Pauli measurements (measurement fiducials) afterwards, leaving the circuit under study essentially unchanged. As a result, DFE mitigates the sampling overhead associated with full quantum process tomography~\cite{chuang1997prescription, Poyatos_1997, mohseni2008tomography} while avoiding the context-disrupting randomization inherent to randomized benchmarking protocols. These properties make DFE a particularly suitable fidelity metric for adaptive calibration schemes designed to mitigate coherent, and circuit-dependent errors.

This advantage becomes particularly relevant when the entangling gates of interest are no longer Clifford. Continuously parameterized entangling gates, such as the $\mathrm{fSim}(\theta,\varphi)$ family used on superconducting processors~\cite{acharya_quantum_2025}, are a representative example. While non-Clifford gates can be benchmarked with randomized protocols, they typically require extensions based on representation-theoretic or beyond-group constructions that can treat broader gate families~\cite{onorati2019individual,chen2022randomized}. In this setting, the standard Clifford-twirl interpretation cannot be used for practical fidelity estimation, as the averaged channel can yield a sum of decay modes rather than a single exponential with the usual fidelity conversion~\cite{dubovitskii2022partial}. Generic fractional gates hence require specific care that tomographic protocols can bypass.

Despite these advantages, standard DFE remains costly in practice. The main limitation is that each sampled input-output Pauli pair generally requires a distinct state-preparation and measurement configuration, resulting in a large experimental overhead for non-Clifford processes. This configuration overhead can be prohibitive in the context of closed-loop calibration workflows, where fidelity must be evaluated repeatedly. DFE also remains sensitive to state-preparation-and-measurement (SPAM) errors. However, readout-error mitigation (REM) can substantially reduce readout-induced bias, while state-of-the-art single-qubit gate fidelities now reach $99.99\%$ in superconducting transmons~\cite{li2023error} and $99.999\%$ in silicon spin qubits~\cite{takeda2026single}. As these local errors become smaller, context-sensitive characterization methods become increasingly practical and can complement SPAM-robust benchmarking tools by retaining information about the specific circuit and its error mechanisms~\cite{debroy2023context}.

In this work, we extend the analytical framework recently introduced in Ref.~\cite{Barber_Rodr_guez_2025} to reduce the experimental overhead needed for direct channel fidelity estimation. We introduce \emph{joint fiducial grouping} for DFE, which partitions the support of the target Pauli transfer matrix into groups whose input and output Paulis are simultaneously qubit-wise commuting (QWC). Each commuting group can be estimated with a single preparation-measurement basis pair, substantially reducing the experimental overhead of DFE. This adapts measurement-grouping techniques developed for observable estimation in variational algorithms to channel certification~\cite{kandala2017hardware, gokhale2020n,verteletskyi2020measurement, yen2020measuring, hamamura2020efficient, Crawford_2021, yen2023deterministic}. We derive finite-sample guarantees for an unbiased estimator of the channel fidelity and show that the number of channel uses is determined by the Rényi-$\tfrac12$ effective support of compatible groups. This provides an additional point of view to the known connection between the cost of direct fidelity estimation and entropic measures of nonstabilizerness of the target system~\cite{leone2023nonstabilizerness}. Joint grouping therefore offers two complementary advantages: it reduces the number of distinct preparation-measurement configurations required to characterize a process and, whenever the target Pauli transfer matrix has non-uniform weight within compatible groups, also reduces the total number of channel evaluations required to achieve a given estimation accuracy.

We validate the analytical predictions and evaluate the practical
performance of the proposed grouping strategy through numerical simulations
on the continuously parameterized $\mathrm{fSim}(\theta,\varphi)$ gate
family. We first map the predicted setting compression and shot advantage
across the full two-angle landscape, identifying where favorable grouping
exists. At a representative point, and at common requested precision and
confidence parameters, fixed-point simulations then compare the resulting
standard and grouped DFE allocations under ideal and mitigated readout.
Moreover, we probe DFE as a context-aware metric for coherent
amplification strategies by substituting the estimator of context-aware
fidelity estimation (CAFE)~\cite{debroy2023context} with its DFE
counterpart and analyze the
relevance of coupling the two frameworks. Finally, we employ the grouped
estimator as the reward function of a reinforcement-learning-based
calibration workflow. By reducing both experimental configuration overhead
and, when favorable grouping exists, the required channel budget, joint
grouped DFE provides a practical context-preserving fidelity objective for
iterative machine-learning-driven suppression of coherent gate errors.

This paper is organized as follows. In Section~\ref{sec:channel_dfe}, we review the standard DFE protocol. We introduce the joint-grouping protocol in Section~\ref{sec:joint_grouping} and derive its sample-complexity bounds in Section~\ref{sec:grouped_sample_complexity}. In Section~\ref{sec:numerics}, we validate these predictions numerically for fractional gates, considering both fidelity estimation and gate calibration under ideal and realistic conditions. Finally, we conclude in Section~\ref{sec:conclusion}.

\section{Direct fidelity estimation for channels}
\label{sec:channel_dfe}

Let $\mathcal{E}$ be an unknown $n$-qubit quantum channel that we aim to characterize and let $\mathcal{U}$ be the desired channel corresponding
to some unitary evolution. Our goal is to quantify how accurately $\mathcal{E}$ realizes $\mathcal{U}$. We do so using the entanglement fidelity, which provides a channel-level measure of agreement between the implemented and target evolutions.

We denote $d=2^n$ and use the phase-free Pauli basis
\begin{equation}
    \mathcal P_n=\{P_\alpha\}_{\alpha=1}^{d^2},
    \qquad
    \operatorname{Tr}(P_\alpha P_{\alpha'})=d\,\delta_{\alpha,\alpha'} .
\end{equation}
For any channel $\mathcal A$, define the Pauli transfer coefficients
\begin{equation}
    \chi_{\mathcal A}(\alpha,\beta)
    :=\frac{1}{d}\operatorname{Tr}\left[P_\alpha\,\mathcal A(P_\beta)\right].
    \label{eq:chi_definition}
\end{equation}
Here, $P_\beta$ is the input Pauli and $P_\alpha$ is the output Pauli observable. To represent the channel itself, we use the Liouville-space notation, in which $|P_\alpha)$ denotes the operator $P_\alpha$ viewed as a vector in operator space, with $(P_\alpha|P_\beta) = \tr{P_\alpha P_\beta}$. The channel can then be expanded in terms of the Pauli transfer coefficients as
\begin{equation}
    \mathcal A
    =\frac{1}{d}\sum_{\alpha,\beta=1}^{d^2}
    \chi_{\mathcal A}(\alpha,\beta)
    |P_\alpha)(P_\beta|.
    \label{eq:channel_expansion}
\end{equation}
We represent both the target $\mathcal{U}$ and the
implemented map $\mathcal{E}$ by their Pauli transfer matrices $\chi_\mathcal{U}$ and
$\chi_{\mathcal{E}}$ of Eq.~\eqref{eq:chi_definition}, i.e.\ as superoperators acting on the space of $d\times d$ operators. The entanglement fidelity is the normalized
Hilbert--Schmidt inner product of these two superoperators,
\begin{equation}\label{eq:entanglement-fid}
    F_e(\mathcal{U},\mathcal E)
    =\frac{1}{d^2}\operatorname{Tr}(\mathcal U^\dagger\mathcal E)
    =\frac{1}{d^2}\sum_{\alpha,\beta=1}^{d^2}
    \chi_\mathcal{U}(\alpha,\beta)\chi_{\mathcal E}(\alpha,\beta).
\end{equation}
Equivalently, $F_e=\langle U|J_{\mathcal{E}}|U\rangle$ is the overlap of the Choi state $J_{\mathcal{E}}$ of $\mathcal{E}$ 
\begin{equation}
  J_{\mathcal{E}} = (\mathcal{E}\otimes\mathcal{I})\bigl(|\Phi_d\rangle\langle\Phi_d|\bigr)
  = \frac{1}{d^{2}}\sum_{\alpha,\beta}\chi_{\mathcal{E}}(\alpha,\beta)\,P_\alpha\otimes P_\beta^{T},
  \label{eq:choi_E}
\end{equation}
with the pure Choi state $|U\rangle=(U \otimes \mathbb{I})|\Phi_d\rangle$ of the target, where $|\Phi_d\rangle = \frac{1}{\sqrt{d}}\sum_{i=1}^d |ii\rangle$ denotes the maximally entangled state. This overlap is a genuine fidelity precisely because the target is unitary, hence pure, whereas $\mathcal{E}$ may be any CPTP map with a mixed Choi state. The prefactor $1/d^{2}$ is fixed by $\sum_{\alpha,\beta}\chi_\mathcal{U}(\alpha,\beta)^{2}=d^{2}$, because unitary conjugation preserves the Hilbert--Schmidt inner product in operator space, so that
$F_e(\mathcal{U},\mathcal{U})=1$. In contrast, the Choi state of a generic CPTP map $\mathcal{E}$ is mixed, so its purity obeys
$\mathrm{Tr}(J_{\mathcal{E}}^{2})\le 1$, equivalently
$\sum_{\alpha,\beta}\chi_{\mathcal{E}}(\alpha,\beta)^{2}\le d^{2}$.

The idea of the protocol proposed in Ref.~\cite{flammia2011direct, dasilva2011practical} is to estimate the entanglement fidelity defined in Eq.~\eqref{eq:entanglement-fid} by sampling pairs of input-output Paulis, according to some probability distribution.
Since only Pauli pairs with nonzero target coefficient enter the importance-sampling distribution, we define the support
\begin{equation}
    \Omega:=\{(\alpha,\beta):\chi_\mathcal{U}(\alpha,\beta)\neq0\} .
    \label{eq:omega_support}
\end{equation}
Then the sum in Eq.~\eqref{eq:entanglement-fid} runs over the indices $(\alpha, \beta) \in \Omega$ and can be rewritten as 
\begin{equation}
    F_e(\mathcal{U},\mathcal E)
    =\sum_{(\alpha,\beta)\in \Omega}\frac{\chi_\mathcal{U}(\alpha,\beta)^2}{d^2}
    \frac{\chi_{\mathcal E}(\alpha,\beta)}{\chi_\mathcal{U}(\alpha,\beta)}.
\end{equation}
Therefore, the probability to sample an individual Pauli pair $(\alpha,\beta)\in\Omega$ is given by 
\begin{equation}
    p_{\alpha,\beta}
    =\frac{\chi_\mathcal{U}(\alpha,\beta)^2}{d^2},
    \label{eq:standard_pair_probability}
\end{equation}
and we can associate to each sampled pair the random variable
\begin{equation}
    X_{\alpha,\beta}
    =\frac{\chi_{\mathcal E}(\alpha,\beta)}{\chi_\mathcal{U}(\alpha,\beta)},
    \label{eq:standard_pair_random_variable}
\end{equation}
which defines an unbiased estimator for the entanglement fidelity, since $\mathbb E_{(\alpha,\beta)\sim p}\left[X_{\alpha,\beta}\right]=\tr{\mathcal{U}^\dagger\mathcal{E}}/d^2.$

We now describe how the entanglement fidelity can be estimated experimentally, with total additive error $2\varepsilon$ and failure probability at most $2\delta$ using Monte Carlo importance sampling. We sample 
$\ell=\lceil 1/(\varepsilon^2\delta)\rceil$ independent Pauli pairs
\[
    (\alpha_t,\beta_t)\sim p_{\alpha,\beta},
    \qquad t=1,\ldots,\ell .
\]
For a fixed sampled pair $(\alpha_t,\beta_t)$, the unknown coefficient 
$\chi_{\mathcal E}(\alpha_t,\beta_t)$ is estimated by sampling eigenstates of the input Pauli $P_{\beta_t}$. Let the input and output Paulis have the spectral decompositions
\begin{equation}
    P_{\beta_t}
    =
    \sum_{b=1}^{d}
    \lambda_t^{(b)}
    |\phi_t^{(b)}\rangle
    \langle\phi_t^{(b)}|,
    \qquad
    \lambda_t^{(b)}\in\{\pm1\},
\end{equation}
and
\begin{equation}
    P_{\alpha_t}
    =
    \sum_{a=1}^{d}
    \mu_t^{(a)}
    |\psi_t^{(a)}\rangle
    \langle\psi_t^{(a)}|,
    \qquad
    \mu_t^{(a)}\in\{\pm1\}.
\end{equation}
We define the number of times each sampled pair is measured as
\begin{equation}
    s_t
    =
    \left\lceil
    \frac{2}
    {\chi_{\mathcal U}(\alpha_t,\beta_t)^2\,\ell\,\varepsilon^2}
    \log\frac{2}{\delta}
    \right\rceil .
    \label{eq:standard_dfe_shots_per_pair}
\end{equation}
For each shot $j=1,\ldots,s_t$, we choose $b_{t,j}$ uniformly at random, prepare 
$|\phi_t^{(b_{t,j})}\rangle$, apply the channel $\mathcal E$, and measure $P_{\alpha_t}$ in its eigenbasis. If the measurement returns the output label $a_{t,j}$, we define
\begin{equation}
    B_{t,j}
    :=
    \lambda_t^{(b_{t,j})}\mu_t^{(a_{t,j})}.
\end{equation}
Then
\begin{align}
    \mathbb E[B_{t,j}]
    &=
    \frac{1}{d}
    \sum_{b_{t,j}=1}^{d}
    \lambda_t^{(b_{t,j})}
    \tr{
        P_{\alpha_t}
        \mathcal E
        \left(
            |\phi_t^{(b_{t,j})}\rangle
            \langle\phi_t^{(b_{t,j})}|
        \right)
    } \\
    &=
    \frac{1}{d}
    \tr{
        P_{\alpha_t}\mathcal E(P_{\beta_t})
   }
    =
    \chi_{\mathcal E}(\alpha_t,\beta_t).
    \label{eq:single_pair_estimator}
\end{align}
Therefore, an unbiased finite-shot estimator of 
$X_{\alpha_t,\beta_t}$ is
\begin{equation}
    \widetilde X_t
    =
    \frac{1}
    {\chi_{\mathcal U}(\alpha_t,\beta_t)s_t}
    \sum_{j=1}^{s_t} B_{t,j},
    \label{eq:finite_shot_standard_dfe_pair}
\end{equation}
and the final channel DFE estimator is the empirical average
\begin{equation}
    \widetilde F_e
    =
    \frac{1}{\ell}
    \sum_{t=1}^{\ell}
    \widetilde X_t .
    \label{eq:standard_dfe_final_estimator}
\end{equation}
With these choices the total error is at most $2\varepsilon$ with probability at least $1-2\delta$.

\section{Joint input-output estimation}
\label{sec:joint_grouping}

The standard protocol treats every Pauli pair $(\alpha,\beta)$ as a different input--output setting. Here, we group Pauli pairs that can be jointly accessed through common input and measurement bases such that each sample provides an estimate for a whole group of commuting Pauli inputs and observables. In this way, within a single input--output setting, one can estimate the Pauli transfer coefficients $\chi_\mathcal{E}(\alpha,\beta)$ of all pairs in the group simultaneously.

Here, an \emph{input--output setting} denotes the common input and output Pauli bases. We call the local circuit that prepares a sampled input eigenstate a \emph{preparation fiducial}, and the local circuit that implements the output-basis measurement a \emph{measurement fiducial}. Their ordered combination specifies an executed \emph{fiducial configuration} (or \emph{fiducial pair}). Thus, fiducial terminology describes the physical boundary operations, whereas input--output setting denotes the shared basis pair counted in the configuration-compression analysis. A fiducial pair should not be confused with a Pauli pair $(\alpha,\beta)$, which indexes a Pauli-transfer coefficient.

Let
\begin{equation}
    \mathcal G=\{G_1,\ldots,G_M\}
\end{equation}
be a partition of $\Omega$. Then, each group is written as
\begin{equation}
    G_m=
    \{(\alpha_{m,1},\beta_{m,1}),\ldots,
      (\alpha_{m,r_m},\beta_{m,r_m})\},
    \label{eq:group_definition}
\end{equation}
where $r_m:=|G_m|$ denotes the size of the group.
A valid joint input-output group satisfies two compatibility conditions. For all $l,l'\in\{1,\ldots,r_m\}$,
\begin{equation}
    [P_{\beta_{m,l}},P_{\beta_{m,l'}}]=0,
    \qquad
    [P_{\alpha_{m,l}},P_{\alpha_{m,l'}}]=0 .
    \label{eq:compatibility_conditions}
\end{equation}
In the hardware implementation we use the stronger qubit-wise commuting condition, so that the preparation- and measurement-fiducial basis changes can be realized by single-qubit rotations. The notation in Eq.~\eqref{eq:compatibility_conditions} is kept general, but throughout the numerical implementation ``commuting'' can be read as ``qubit-wise commuting''.

Equivalently, the groups can be constructed from a compatibility graph, using standard heuristics developed for measurement grouping~\cite{Crawford_2021}. Each vertex is a supported Pauli pair $(\alpha,\beta)\in\Omega$. Two vertices $(\alpha,\beta)$ and $(\tilde\alpha,\tilde\beta)$ are adjacent if
\begin{equation}
    [P_\beta,P_{\tilde\beta}]=0,
    \qquad
    [P_\alpha,P_{\tilde\alpha}]=0 .
    \label{eq:compatibility_graph_edges}
\end{equation}
A valid group is a clique of this graph, and a grouping is a partition of the vertices into compatible subsets. In practice, we use a greedy clique-cover heuristic weighted by the ideal coefficients $\chi_\mathcal{U}(\alpha,\beta)^2$.

For each group $G_m$, define the target and experimental coefficient vectors
\begin{equation}
    \mathbf u_m
    :=\left(
    \chi_\mathcal{U}(\alpha_{m,1},\beta_{m,1}),\ldots,
    \chi_\mathcal{U}(\alpha_{m,r_m},\beta_{m,r_m})
    \right),
    \label{eq:u_vector}
\end{equation}
\begin{equation}
    \mathbf e_m
    :=\left(
    \chi_{\mathcal E}(\alpha_{m,1},\beta_{m,1}),\ldots,
    \chi_{\mathcal E}(\alpha_{m,r_m},\beta_{m,r_m})
    \right).
    \label{eq:e_vector}
\end{equation}
The fidelity can then be written as a sum over groups,
\begin{equation}
    F_e(\mathcal{U},\mathcal E)
    =\frac{1}{d^2}\sum_{m=1}^{M}\mathbf u_m\cdot\mathbf e_m .
    \label{eq:fe_group_sum}
\end{equation}
The grouped analogue of DFE samples commuting groups instead of individual Pauli pairs. Define
\begin{equation}
    p_m:=\frac{\|\mathbf u_m\|_2^2}{d^2},
    \label{eq:group_probability}
\end{equation}
as the new probability distribution, since 
\begin{equation}
    \sum_{m=1}^M\|\mathbf u_m\|_2^2
    =\sum_{(\alpha,
    \beta)\in\Omega}\chi_\mathcal{U}(\alpha,\beta)^2=d^2 .
\end{equation}
For each group, we define the random variable
\begin{equation}
    X_m:=\frac{\mathbf u_m\cdot\mathbf e_m}{\|\mathbf u_m\|_2^2} .
    \label{eq:group_random_variable}
\end{equation}
Then $F_e(\mathcal{U},\mathcal E)=\sum_{m=1}^{M}p_m X_m$.
Thus, if the variables $X_m$ were known exactly, an unbiased estimator would be obtained by sampling independent and identically distributed group labels $m_t\sim p_m$ for $t=1,\ldots,\ell$ and computing
\begin{equation}
    Y:=\frac{1}{\ell}\sum_{t=1}^{\ell}X_{m_t} .
    \label{eq:outer_exact_estimator}
\end{equation}
This procedure applies because the estimator is unbiased, as follows directly from Eq.~\eqref{eq:fe_group_sum}, and has bounded variance, as shown below
\begin{align}
    \mathbb E_{m\sim p}[X_m^2]
    &=\sum_{m=1}^{M}
    \frac{\|\mathbf u_m\|_2^2}{d^2}
    \left(\frac{\mathbf u_m\cdot\mathbf e_m}{\|\mathbf u_m\|_2^2}\right)^2
    \\
    &\leq
    \frac{1}{d^2}\sum_{m=1}^{M}\|\mathbf e_m\|_2^2
    \leq 1,
    \label{eq:var_bound_outter}
\end{align}
where the first inequality is Cauchy--Schwarz and the second is the physicality bound $\sum_{\alpha,\beta}\chi_{\mathcal{E}}(\alpha,\beta)^{2}\le d^{2}$ of Eq.~\eqref{eq:choi_E}. The outer sampling variance is therefore bounded by a constant, as in standard DFE, and one can apply Monte Carlo importance sampling to this probability distribution to estimate the fidelity with additive error $\varepsilon$ and failure probability $\delta$.

Therefore, the grouped estimator does not change the fidelity being estimated. It only changes the elementary sampling unit from a single Pauli pair $(\alpha,\beta)$ to a compatible block $G_m$ whose total sampling weight is $\|\mathbf u_m\|_2^2/d^2$.

\subsection{Estimating a sampled group}
\label{subsec:estimating_sampled_group}

We now describe how to estimate in a realistic setting the contribution associated with a sampled
group in order to obtain a finite-sample estimator for the channel fidelity with additive error $\varepsilon$ and failure probability $\delta$. Let
\begin{equation}
    G_m=\{(\alpha_{m,l},\beta_{m,l})\}_{l=1}^{r_m} .
\end{equation}
be a valid group. By construction, the input Paulis $\{P_{\beta_{m,l}}\}_l$ are mutually commuting, and so are the output Paulis $\{P_{\alpha_{m,l}}\}_l$. Hence, we may choose common eigenbases
\begin{equation}
    \mathcal B_m^{\rm in}
    =
    \{|\phi_m^{(b)}\rangle\}_{b=1}^{d},
    \qquad
    \mathcal B_m^{\rm out}
    =
    \{|\psi_m^{(a)}\rangle\}_{a=1}^{d}
\end{equation}
such that, for every $l=1,\ldots,r_m$,
\begin{align}
    P_{\beta_{m,l}}
    &=
    \sum_{b=1}^{d}
    \lambda_{m,l}^{(b)}
    |\phi_m^{(b)}\rangle\langle\phi_m^{(b)}|,
    \qquad
    \lambda_{m,l}^{(b)}\in\{\pm1\},
    \label{eq:input_common_basis}
    \\
    P_{\alpha_{m,l}}
    &=
    \sum_{a=1}^{d}
    \mu_{m,l}^{(a)}
    |\psi_m^{(a)}\rangle\langle\psi_m^{(a)}|,
    \qquad
    \mu_{m,l}^{(a)}\in\{\pm1\}.
    \label{eq:output_common_basis}
\end{align}
where $\lambda_{m,l}^{(b)}$ and $\mu_{m,l}^{(a)}$ are the associated eigenvalues. The two bases need not be the same, since they diagonalize different sets of
commuting Pauli operators.

A single experimental shot for group $m$ is performed as follows. First, an
input label $b_j\in\{1,\ldots,d\}$ is sampled uniformly at random and the state
$|\phi_{m}^{(b_j)}\rangle$ is prepared. The experimental channel $\mathcal E$ is
then applied to the state, and the commuting output Paulis
$\{P_{\alpha_{m,l}}\}_l$ are measured in their common eigenbasis
$\mathcal B_m^{\rm out}$. The measurement returns an output label
$a_j\in\{1,\ldots,d\}$ with conditional probability
\begin{equation}\label{eq:group_output_probability}
    p_m(a_j|b_j) = \tr{|\psi_{m}^{(a_j)}\rangle\langle\psi_{m}^{(a_j)}|\mathcal{E}\left(|\phi_{m}^{(b_j)}\rangle\langle\phi_{m}^{(b_j)}|\right)}
\end{equation}
given that group $m$ has been sampled at this step.
This single measurement outcome determines the eigenvalue
$\mu_{m,l}^{(a_j)}$ for every output Pauli in the group. Therefore, for each
pair $(\alpha_{m,l},\beta_{m,l})\in G_m$, the same shot defines the random
variable
\begin{equation}
    B_{m,l}^{(j)}
    :=\lambda_{m,l}^{(b_j)}\mu_{m,l}^{(a_j)} .
    \label{eq:B_m_l_j}
\end{equation}
Each $B_{m,l}^{(j)}$ is an unbiased estimator of the corresponding
Pauli-transfer coefficient of $\mathcal E$. Indeed,
\begin{align}
    \mathbb E\left[B_{m,l}^{(j)}\right]
    &=\frac{1}{d}\sum_{b_j = 1}^d \lambda_{m,l}^{(b_j)} \tr{P_{\alpha_{m,l}}\mathcal{E}\left(
    |\phi_{m}^{(b_j)}\rangle\langle\phi_{m}^{(b_j)}|\right)} \nonumber \\
    &= \frac{1}{d}\tr{P_{\alpha_{m,l}}\mathcal{E}\left(P_{\beta_{m,l}}\right)} = \chi_{\mathcal E}(\alpha_{m,l},\beta_{m,l}) 
    \label{eq:B_expectation}
\end{align}
Thus, one experimental shot gives simultaneous unbiased estimates of all
Pauli-transfer coefficients in the group.

We now combine the simultaneous single-shot estimates into an estimator for the
corresponding group contribution. For a single shot $j$ of group $m$, we define
\begin{equation}
    C_m^{(j)}:=\sum_{l=1}^{r_m}\chi_\mathcal{U}(\alpha_{m,l},\beta_{m,l})B_{m,l}^{(j)}.
    \label{eq:C_j_m}
\end{equation}
Using Eq.~\eqref{eq:B_expectation}, its expectation is
\begin{equation}
    \mathbb E\left[C_m^{(j)}\right]
    = \sum_{l=1}^{r_m}\chi_\mathcal{U}(\alpha_{m,l},\beta_{m,l})
    \chi_{\mathcal E}(\alpha_{m,l},\beta_{m,l}) 
    =\mathbf u_m\cdot\mathbf e_m,
    \label{eq:C_expectation}
\end{equation}
and is bounded as $|C_m^{(j)}|\leq ||\mathbf{u}_m||_1$ since $|B_{m,l}^{(j)}| = 1$. 
Thus, $C_m^{(j)}$ is a single-shot unbiased estimator of the numerator in Eq.~\eqref{eq:group_random_variable}. In the finite-shot protocol, each sampled
group $m$ is estimated using $s_m$ independent shots, leading to the estimator
\begin{equation}
    \widetilde X_m
    :=\frac{1}{\|\mathbf u_m\|_2^2s_m}
    \sum_{j=1}^{s_m}C_m^{(j)}.
    \label{eq:X_tilde_m}
\end{equation}
The estimator is unbiased, since
\begin{equation}
    \mathbb E[\widetilde X_m]= \frac{1}{\|\mathbf u_m\|_2^2s_m}
    \sum_{j=1}^{s_m}\mathbb E\left[C_m^{(j)}\right] = \frac{\mathbf u_m\cdot\mathbf e_m}{\|\mathbf u_m\|_2^2} =  X_m .
    \label{eq:X_tilde_unbiased}
\end{equation}

Therefore, by averaging these values over all $\ell$ samples $\tilde{X}_{m_t}$ we can obtain the finite-shot grouped DFE estimator
\begin{equation}
    \label{eq:unbiased-estimator}
    \tilde{Y} = \frac{1}{\ell}\sum_{t=1}^\ell \tilde{X}_{m_t} = \frac{1}{\ell}\sum_{t=1}^\ell \sum_{j=1}^{s_{m_t}} \frac{C_{m_t}^{(j)}}{\|\mathbf u_{m_t}\|_2^2s_{m_t}}.
\end{equation}

One could instead aggregate the $\ell$ samples by median-of-means to compute the finite-shot estimate. Partitioning them into $K$ disjoint batches such that $\ell = K \tilde \ell$, computing the empirical mean of $\tilde X_{m_t}$ for each batch, and reporting the median of the batch means $\tilde Y:=\operatorname{median}\!\left\{\bar Y_1,\ldots,\bar Y_K\right\}$ is the standard route to logarithmic dependence on the confidence parameter. However, median-of-means can introduce bias in the estimate. Since the single-shot contributions within each sampled group are bounded, we retain the empirical mean and control the finite-shot fluctuations directly using Hoeffding's inequality. This preserves the unbiasedness of the fidelity estimator.

\section{Performance guarantees}
\label{sec:grouped_sample_complexity}

We now bound the number of sampled groups $\ell$ and the per-group shot budget $s_{m_t}$ needed to estimate $F_e(\mathcal{U},\mathcal{E})$ to additive accuracy $2\varepsilon$ with failure probability at most $2\delta$. The central quantity controlling the target-dependent finite-shot cost of the grouped protocol is the effective group size 
\begin{equation}
    \kappa_m := \frac{\lVert \mathbf{u}_m\rVert_1^2}{\lVert \mathbf{u}_m\rVert_2^2}.
\end{equation}
The following theorem summarizes the statistical guarantee and the resulting resource complexity. 

\begin{theorem}[Performance guarantee for grouped channel DFE]
  \label{thm:complexity}
  Let $\mathcal{U}$ be an $n$-qubit unitary and $\mathcal{E}$ a CPTP channel, set
  $d=2^{n}$, and let $\mathcal{G}=\{G_1,\dots,G_M\}$ be any compatible grouping
  of the support $\Omega$. Then, for any accuracy $\varepsilon\in(0,1)$ and failure probability $\delta\in(0,1)$, choose $\ell=\lceil1/(\varepsilon^2\delta)\rceil$ sampled groups, and 
  estimate each with the shot allocation $s_{m_t}$ of Eq.~\eqref{eq:sm} below. Then the group estimator $\tilde Y$ satisfies 
  \begin{equation}\label{eq:failure-probability}
    \Pr\!\bigl[\,\lvert \widetilde Y - F_e(\mathcal{U},\mathcal{E})\rvert \ge 2\varepsilon\,\bigr]
    \le 2\delta .
  \end{equation}
  Moreover, its expected number of channel uses obeys 
  \begin{equation}
    \mathbb{E} [N_{\mathrm{ch}}]
    = \mathcal{O}\left(\frac{1}{\varepsilon^{2}\delta}+\frac{\log{(1/\delta)}}{d^{2}\varepsilon^2}\sum_{m=1}^{M}\kappa_m\right)
    \label{eq:thm_N_ch}
  \end{equation}
  with $1\leq \kappa_m \leq r_m$ and $\sum_{m=1}^M \kappa_m \leq |\Omega|$. So the target-dependent finite-shot term in the bound is never larger than its ungrouped counterpart, while using at most $M$ distinct input--output settings. 
  \end{theorem}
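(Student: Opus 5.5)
The proof splits the total error into an outer sampling error and an inner finite-shot error, each controlled at level $\varepsilon$ with failure probability $\delta$, and combines them with a union bound. Let $Y=\frac1\ell\sum_t X_{m_t}$ be the ideal estimator of Eq.~\eqref{eq:outer_exact_estimator}. The triangle inequality gives $|\widetilde Y-F_e|\le|\widetilde Y-Y|+|Y-F_e|$. For the outer term, Eq.~\eqref{eq:fe_group_sum} shows $\mathbb E[Y]=F_e$, and the second-moment bound Eq.~\eqref{eq:var_bound_outter} gives $\mathrm{Var}(Y)\le 1/\ell$. With $\ell=\lceil 1/(\varepsilon^2\delta)\rceil$, Chebyshev's inequality yields $\Pr[|Y-F_e|\ge\varepsilon]\le\delta$.

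For the inner term, condition on the sampled labels $m_1,\dots,m_\ell$. Then $\widetilde Y-Y$ is a sum of independent, zero-mean terms indexed by $(t,j)$:
\[
\frac{C_{m_t}^{(j)}-\mathbf u_{m_t}\cdot\mathbf e_{m_t}}{\ell\,s_{m_t}\|\mathbf u_{m_t}\|_2^2}.
\]
Since $|C_m^{(j)}|\le\|\mathbf u_m\|_1$, each such term lies in an interval of width $2\|\mathbf u_{m_t}\|_1/(\ell s_{m_t}\|\mathbf u_{m_t}\|_2^2)$. Hoeffding's inequality then gives
\[
\Pr\bigl[|\widetilde Y-Y|\ge\varepsilon\bigr]\le 2\exp\!\left(-\frac{\varepsilon^2\ell^2}{2\sum_t \kappa_{m_t}/(s_{m_t}\|\mathbf u_{m_t}\|_2^2)}\right).
\]
This motivates the allocation that Eq.~\eqref{eq:sm} should be:
\[
s_m=\left\lceil\frac{2\kappa_m}{\|\mathbf u_m\|_2^2\,\ell\,\varepsilon^2}\log\frac2\delta\right\rceil .
\]
This is the grouped analogue of Eq.~\eqref{eq:standard_dfe_shots_per_pair}, and it reduces to it exactly when $r_m=1$. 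With this choice the denominator in the exponent is at most $2\ell\cdot\ell\varepsilon^2/(2\log(2/\delta))$, so the bound is at most $\delta$ uniformly over the conditioning. A union bound over the two events then gives Eq.~\eqref{eq:failure-probability}.

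For the resource count, $N_{\rm ch}=\sum_t s_{m_t}$. Using $\lceil x\rceil\le x+1$ and $\Pr[m_t=m]=p_m=\|\mathbf u_m\|_2^2/d^2$,
\[
\mathbb E[N_{\rm ch}]\le \ell+\ell\sum_m p_m\frac{2\kappa_m\log(2/\delta)}{\|\mathbf u_m\|_2^2\ell\varepsilon^2}.
\]
The $\|\mathbf u_m\|_2^2$ factors cancel, leaving
\[
\mathbb E[N_{\rm ch}]\le \ell+\frac{2\log(2/\delta)}{d^2\varepsilon^2}\sum_m\kappa_m .
\]
Substituting $\ell=\mathcal O(1/(\varepsilon^2\delta))$ gives Eq.~\eqref{eq:thm_N_ch}.

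It remains to prove the stated properties of $\kappa_m$ and the setting count. The bound $\kappa_m\ge1$ follows from $\|\mathbf u_m\|_1\ge\|\mathbf u_m\|_2$. The bound $\kappa_m\le r_m$ is Cauchy--Schwarz on the $r_m$ nonzero entries of $\mathbf u_m$, and summing gives $\sum_m\kappa_m\le\sum_m r_m=|\Omega|$. The ungrouped protocol is the special case of the trivial partition, where every $\kappa_m=1$ and the sum equals $|\Omega|$; this yields the comparison claim. Finally, each group requires only its fixed pair of bases $(\mathcal B_m^{\rm in},\mathcal B_m^{\rm out})$, so at most $M$ distinct settings are ever used.

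The main obstacle is the inner Hoeffding step. The per-shot range involves $\|\mathbf u_m\|_1$ rather than $\|\mathbf u_m\|_2$, so the allocation must be tuned so that $\kappa_m$ appears while the $\|\mathbf u_m\|_2^2$ normalization still cancels against $p_m$ in the expectation. One must also check that conditioning on the random labels leaves the shots independent, so that Hoeffding can be applied conditionally and then averaged over the labels.
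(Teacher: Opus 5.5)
Your proposal is correct and follows essentially the same route as the paper. It splits the error into outer and inner parts, bounds the outer part by Chebyshev via Eq.~\eqref{eq:var_bound_outter}, bounds the inner part by conditional Hoeffding using the per-shot range $2\|\mathbf u_m\|_1$, combines them by a union bound, and counts channel uses with $\lceil x\rceil\le x+1$ applied to $\ell\sum_m p_m s_m$, plus the same norm inequalities for $\kappa_m$; your reconstructed allocation $s_m=\lceil 2\kappa_m\log(2/\delta)/(\|\mathbf u_m\|_2^2\ell\varepsilon^2)\rceil$ is exactly the paper's Eq.~\eqref{eq:sm}, since $\kappa_m/\|\mathbf u_m\|_2^2=\|\mathbf u_m\|_1^2/\|\mathbf u_m\|_2^4$.
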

  
\begin{proof}
The error of the finite-shot estimator $\widetilde{Y}$ separates into an \emph{outer} contribution from sampling only $\ell$ groups and an \emph{inner} contribution from estimating each sampled group with finitely many shots,
  \begin{equation}
    \widetilde Y - F_e
    = \underbrace{\bigl(Y - F_e\bigr)}_{\text{outer}}
    + \underbrace{\bigl(\widetilde Y - Y\bigr)}_{\text{inner}} .
    \label{eq:split}
  \end{equation}
  For the outer term, the variance bound in Eq.~\eqref{eq:var_bound_outter} allows a direct application of Chebyshev's inequality, which gives
  $\Pr[|Y-F_e|\ge\varepsilon]\le 1/(\ell\varepsilon^2)$, so choosing
  $\ell = \lceil 1 /\varepsilon^2\delta\rceil$
  makes the outer error at most $\varepsilon$ with probability at least $1-\delta$. 
  
  For the inner term, we condition on the sampled labels $\{m_t\}_{t=1}^\ell$. Given that the terms entering in Eq.~\eqref{eq:unbiased-estimator} are independent and, up to the overall prefactor $1/\ell$, bounded as
  \begin{equation}\label{eq:bound-Cj}
    \left|\frac{C_{m_t}^{(j)}}{\|\mathbf u_{m_t}\|_2^2s_{m_t}}\right| \leq \frac{\lVert \mathbf{u}_{m_t}\rVert_1}{\lVert \mathbf{u}_{m_t}\rVert_2^2\,s_{m_t}},
  \end{equation}
  using the bound in Eq.~\eqref{eq:bound-Cj}, Hoeffding's inequality therefore yields
  \begin{equation}
    \Pr\left[\lvert \widetilde{Y} - Y\rvert\ge\varepsilon \right] \le 2\exp \left(-\frac{2\varepsilon^2\ell^2}{\displaystyle\sum_{t=1}^{\ell}\sum_{j=1}^{s_{m_t}}
                 \Bigl(\tfrac{2\lVert \mathbf{u}_{m_t}\rVert_1}
                            {\lVert \mathbf{u}_{m_t}\rVert_2^2 s_{m_t}}\Bigr)^{\!2}}\right).
    \label{eq:hoeff}
  \end{equation}
  We want this probability to be at most $\delta$, which is achieved by choosing the per-group budget
  \begin{equation}
    s_{m_t} =
    \left\lceil
    \frac{2\,\lVert \mathbf{u}_{m_t}\rVert_1^2}
         {\ell\,\varepsilon^2\,\lVert \mathbf{u}_{m_t}\rVert_2^4}\,
    \log\frac{2}{\delta}
    \right\rceil,
    \label{eq:sm}
  \end{equation}
  which depends on the target coefficients of the sampled group, known in advance from $\mathcal{U}$. 
  Since this holds for every realization of the labels $\{m_t\}$, it also holds unconditionally.
  Combining the two events through the union bound,
  \begin{align}
      \Pr\!\left[
          |\widetilde Y-F_e(\mathcal{U},\mathcal E)|\geq 2\varepsilon
      \right]
      &\leq
      \Pr\!\left[
          |Y-F_e(\mathcal{U},\mathcal E)|\geq \varepsilon
      \right]
      \nonumber \\
      &\quad+
      \Pr\!\left[
          |\widetilde Y-Y|\geq \varepsilon
      \right],
      \label{eq:union_bound_outer_inner}
  \end{align}
gives Eq.~\eqref{eq:failure-probability}.

Each shot uses the experimental channel once. Thus, if group $m$ is sampled
with probability $p_m$ and is estimated with $s_m$ shots, the expected
number of channel uses is $\mathbb{E}[N_{\mathrm{ch}}]=\ell\sum_{m} p_m s_m$. Using $p_m=\lVert \mathbf{u}_m\rVert_2^2/d^2$, Eq.~\eqref{eq:sm}, and $\lceil x\rceil \leq x+1$ gives
\begin{equation}
  \mathbb{E}[N_{\mathrm{ch}}]
  \leq 1 + \frac{1}{\varepsilon^2\delta}+
    \frac{2\log{(2/\delta)}}{d^2\varepsilon^2}\sum_{m=1}^{M}\kappa_m.
  \label{eq:Nch}
\end{equation}
which proves Eq.~\eqref{eq:thm_N_ch}. Finally, $\lVert \mathbf{u}_{m_t}\rVert_1 \geq \lVert \mathbf{u}_{m_t}\rVert_2$, valid for any vector, gives $\kappa_m \geq 1$, while Cauchy-Schwarz gives $\lVert \mathbf{u}_{m}\rVert_1^2 \leq r_m \lVert \mathbf{u}_{m}\rVert_2^2$, hence $\kappa_m \leq r_m$. Summing over the partition gives $\sum_m \kappa_m\le \sum_m r_m =|\Omega|$. 
\end{proof}

The leading statistical contribution, $1/(\varepsilon^2\delta)$, is unchanged by grouping, so the protocol has the same dependence on the target precision and confidence as standard DFE. The advantage appears in the finite-shot overhead, where the number of individually estimated Pauli pairs is replaced by the total effective support. Indeed, if every Pauli pair is treated separately, as in the standard protocol, this quantity reduces to the number of nonzero pairs. Therefore, the grouped overhead is never larger than the corresponding ungrouped one. It can be substantially smaller when many compatible Pauli pairs are collected under the same measurement fiducial and their target weight is concentrated on a small effective support. Thus, grouping improves the practical channel-use cost without sacrificing the statistical guarantees of DFE, while reducing the planned input--output settings from the Pauli-pair plan to at most the $M$ group settings.

\subsection{Effective group size and entropic interpretation}
 Theorem~\eqref{thm:complexity} shows that the target-dependent finite-shot overhead is governed by the effective group sizes $\kappa_m$, rather than by the nominal group sizes $r_m$. The bounds in Eq.~\eqref{eq:thm_N_ch} imply that the grouped finite-shot contribution is never worse than the ungrouped one, and the improvement is controlled by how much smaller $\kappa_m$ is than $r_m$. 
 
This has a useful entropic interpretation. Let
$q_{m,l}:=|u_{m,l}|^2/\lVert  \mathbf{u}_m\rVert_2^2$ be the target weight conditioned
on group $G_m$. Then
\begin{equation}
  \kappa_m
  = \Bigl(\textstyle\sum_{l}\sqrt{q_{m,l}}\Bigr)^{2}
  = 2^{\,S_{1/2}(q_m)},
  \label{eq:renyi}
\end{equation}
where $S_{1/2}(q_m)=2\log_2\!\sum_l\sqrt{q_{m,l}}$ is the classical R\'enyi-$1/2$
entropy. Thus each group contributes by its R\'enyi-$1/2$ effective
support, rather than its nominal size $r_m$. This translates into the total channel uses as
\begin{equation}
  \mathbb{E}[N_{\mathrm{ch}}]
  = \mathcal{O}\!\left(\frac{1}{\varepsilon^{2}\delta}+\frac{\log(1/\delta)}{\varepsilon^2 d^2}
      \sum_{m=1}^{M} 2^{\,S_{1/2}(q_m)}\right).
\end{equation}
Writing $\Delta_m:=\log_2 r_m - S_{1/2}(q_m)\ge0$, we have
$\kappa_m=r_m\,2^{-\Delta_m}$. This quantity $\Delta_m$ measures how far the distribution is from being uniform over the whole group. Therefore, a group whose weight is spread uniformly over its $r_m$ elements has $\Delta_m=0$ and $\kappa_m = r_m$, and gives no shot improvement. By contrast, a group whose weight concentrates on $k_m\ll r_m$ pairs has
$\Delta_m\simeq\log_2(r_m/k_m)$ and contributes an effective cost
$\kappa_m\simeq k_m$.

This entropic control of the sampling cost parallels the certification results of Ref.~\cite{leone2023nonstabilizerness}, where the hardness of direct fidelity estimation is governed by a stabilizer R\'enyi entropy of the target state, or of the Choi state of the target unitary. In our setting the relevant quantity is not the global nonstabilizerness of $\mathcal{U}$, but the R\'enyi-$1/2$ entropy of the target weights conditioned on each compatible group. Grouping can therefore keep the cost low even for non-Clifford targets, provided the weight within each group remains concentrated. Corollary~\ref{cor:clifford} shows the converse situation, in which nonstabilizerness vanishes but so does the shot advantage.

\subsection{Improvement over ungrouped DFE}
\label{subsec:figures_of_merit}
 
To analyze the performance offered by the grouping technique, it is useful to distinguish its impact on two distinct figures of merit.
 
\emph{Input--output settings.} Each pair in $\Omega$ requires its own input--output setting,
so $C_{\mathrm{ungrouped}}=|\Omega|$, while grouping gives
$C_{\mathrm{grouped}}=M$. The input--output setting compression
\begin{equation}
  R_C := \frac{|\Omega|}{M}\;\ge\;1
\end{equation}
is a guaranteed input--output setting advantage, independent of the
implemented channel (see Appendix~\ref{app:configuration_complexity} for a
detailed explanation). It holds regardless of the target's coefficient
distribution and is the dominant saving on control hardware whose
compilation unit is the full circuit. In an experiment, each input--output
setting is resolved into the sampled input eigenstates. Different
eigenstates require different state-preparation instructions even when the
input Pauli basis and output-measurement basis are unchanged, each one yielding its own fiducial configuration. The total number of fiducial configurations is a proxy for the practical instruction
overhead a control system would incur, but does not necessarily translate into a direct measure of wall-clock runtime for their experimental realization. A finite sampled
schedule can use fewer settings than the planned counts, and merging Pauli
pairs that happen to share the same preparation and measurement bases can
reduce them further, in either protocol. Moreover, the wall-clock time is ultimately determined by how the control stack handles those varying instructions. More information on this is available in the Appendix ~\ref{app:configuration_complexity}, where we discuss opportunities of reducing the compilaton overhead of DFE with state-of-the-art control systems.
 
\emph{Channel uses.} When each group contains one Pauli pair ($r_m=1$, $\kappa_m=1$,
$\sum_m\kappa_m=|\Omega|$) Eq.~\eqref{eq:thm_N_ch} reduces to the channel
analogue of Flammia--Liu DFE,
\begin{equation}
  \mathbb{E}[N_{\mathrm{ch}}^{\mathrm{DFE}}]
  = \mathcal{O}\!\left(\frac{1}{\varepsilon^{2}\delta}+\frac{\log{(1/\delta)}|\Omega|}{d^{2}\varepsilon^2}\right),
\end{equation}
and we recover the known interpolation between $\mathcal{O}(1/\varepsilon^2\delta)$
for Clifford targets ($|\Omega|=d^2$) and $\mathcal{O}(d^2\log(1/\delta)/\varepsilon^2)$
for generic full-support targets ($|\Omega|\to d^4$). For a general grouping
the shot improvement factor is
\begin{equation}
  I_G := \frac{|\Omega|}{\sum_{m}\kappa_m},
  \qquad
  1\le I_G\le \frac{|\Omega|}{M}=R_C,
  \label{eq:IG}
\end{equation}
where the upper bound $R_C$ is approached precisely when every group's
weight is dominated by a single coefficient ($\kappa_m\to1$).

\begin{corollary}[Clifford targets]
\label{cor:clifford}
If $\mathcal{U}$ is Clifford then $|\chi_\mathcal{U}(\alpha,\beta)|\in\{0,1\}$, every $q_m$ is
uniform, $\kappa_m=r_m$, and $\sum_m\kappa_m=|\Omega|=d^2$. Grouping then
yields $I_G=1$ $($no worst-case shot saving$)$ but the full configuration
compression is $R_C=d^2/M$.
\end{corollary}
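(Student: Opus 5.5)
The plan is to derive each claim from the structure of the Pauli transfer matrix of a Clifford unitary, and then substitute into the definitions of $\kappa_m$, $I_G$ and $R_C$.

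\textbf{Step 1: the transfer matrix is a signed permutation.} A Clifford $U$ maps each Pauli to a signed Pauli under conjugation, $U P_\beta U^\dagger = \pm P_{\pi(\beta)}$, where $\pi$ is a permutation of $\{1,\dots,d^2\}$. Hence
\[
\chi_\mathcal{U}(\alpha,\beta)=\frac{1}{d}\operatorname{Tr}\!\left[P_\alpha U P_\beta U^\dagger\right]=\pm\delta_{\alpha,\pi(\beta)},
\]
using the orthogonality relation $\operatorname{Tr}(P_\alpha P_{\alpha'})=d\,\delta_{\alpha,\alpha'}$. Two facts follow directly. First, $|\chi_\mathcal{U}(\alpha,\beta)|\in\{0,1\}$. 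Second, the support $\Omega$ is the graph of $\pi$, so $|\Omega|=d^2$. This is consistent with the normalization $\sum_{\alpha,\beta}\chi_\mathcal{U}(\alpha,\beta)^2=d^2$ from Section~\ref{sec:channel_dfe}.

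\textbf{Step 2: every group has uniform weight.} Take any group $G_m$ of the partition of $\Omega$. Every entry of $\mathbf u_m$ has modulus $1$. Therefore $\|\mathbf u_m\|_1=r_m$ and $\|\mathbf u_m\|_2^2=r_m$. This gives
\[
\kappa_m=\frac{r_m^2}{r_m}=r_m,
\]
which is the equality case of the Cauchy--Schwarz bound in Theorem~\ref{thm:complexity}. Equivalently, the conditional weights are $q_{m,l}=1/r_m$ for all $l$, so $q_m$ is uniform. Then $S_{1/2}(q_m)=\log_2 r_m$ and $\Delta_m=0$, in agreement with Eq.~\eqref{eq:renyi}.

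\textbf{Step 3: the figures of merit.} Because $\mathcal G$ partitions $\Omega$, summing Step 2 gives $\sum_m\kappa_m=\sum_m r_m=|\Omega|=d^2$. Substituting into Eq.~\eqref{eq:IG} gives $I_G=1$. The configuration compression is unaffected by the coefficient values, so $R_C=|\Omega|/M=d^2/M$ by its definition.

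There is no real obstacle in this argument. The only care needed is in Step 1, to justify that conjugation by a Clifford sends distinct Paulis to distinct Paulis, so that $\pi$ is a genuine permutation. This holds because conjugation is a bijection on the Pauli group modulo phases. It is also what guarantees $|\Omega|=d^2$ rather than merely $|\Omega|\le d^2$.

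Finally, the statement says \emph{no worst-case shot saving}. I would interpret this as saying that the target-dependent term of the bound in Eq.~\eqref{eq:thm_N_ch} is unchanged by grouping. It is not a claim that grouped and ungrouped shot counts coincide for every realization.
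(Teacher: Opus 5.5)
Your proposal is correct and follows the same route as the paper. The paper's only justification is the remark that a Clifford's Pauli transfer matrix is a signed permutation, after which $|\chi_\mathcal{U}|\in\{0,1\}$ is substituted directly into the definitions of $\kappa_m$, $I_G$ and $R_C$; your Step~1 additionally spells out that Clifford conjugation is a bijection on Paulis modulo signs, which is what fixes $|\Omega|=d^2$.
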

 
The qualitative message is that configuration compression is always
available, whereas worst-case shot compression requires the target weight
to be non-uniform within compatible blocks. In other words, the improvement in Eq.~\eqref{eq:thm_N_ch} is most relevant for structured non-Clifford dynamics. For Clifford channels, the Pauli transfer matrix is a signed permutation, and standard DFE is already essentially optimal. The useful regime is instead when the gate has nonstabilizerness but its Choi-Pauli distribution remains highly non-uniform, with most of the weight concentrated on a small effective support inside the commuting groups. In that case, grouping replaces the ungrouped count $|\Omega|$, which approaches $d^4$ for generic non-Clifford targets, by $\sum_m 2^{S_{1/2}(q_m)}$, which can be much smaller when the coefficients within each group are peaked rather than flat. Thus, the largest reductions in channel-use complexity are expected when the target Pauli-transfer weight is concentrated on a few coefficients within large compatible groups, as can occur for structured non-Clifford rotations and near-Clifford circuits. By contrast, when the weight is broadly distributed across the coefficients within each group, grouping primarily reduces the number of experimental configurations.

\section{Numerical results: Fractional gates characterization}
 \label{sec:numerics}
Numerical results are reported as the average gate fidelity, obtained from the
estimated entanglement fidelity through the standard dimensional
relation
\begin{equation}
  F(\mathcal U,\mathcal E)
  =\frac{d\,F_e(\mathcal U,\mathcal E)}{d+1} + \frac{1}{d+1}.
  \label{eq:favg}
\end{equation}
\subsection{Fractional gates}
\label{subsec:fsim_grouping_landscape}

\begin{figure*}[t]
  \centering
  \includegraphics[width=0.9\textwidth]{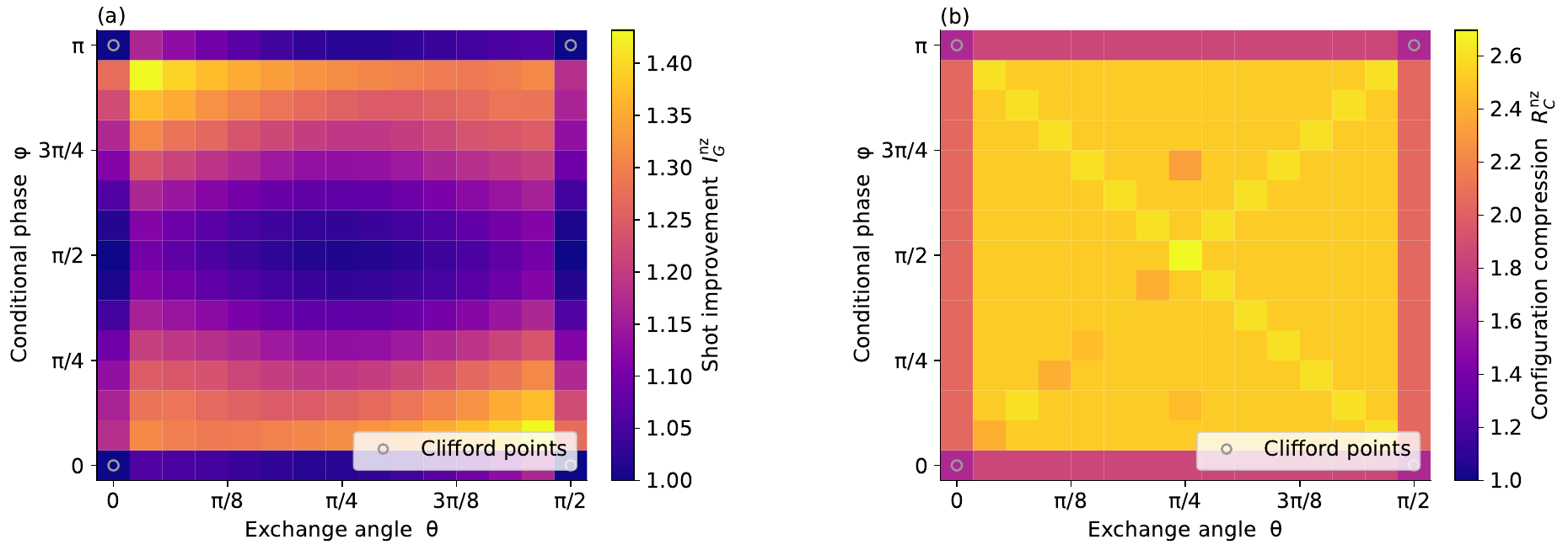}
  \caption{
  Analytic grouping expectations over the two-angle fSim family. (a) Leading weighted protocol-shot improvement
  $I_G^{\mathrm{nz}}=|\Omega_{\mathrm{nz}}|/\sum_m\kappa_m$.
  (b) Analytic input--output setting compression factor $R_C^{\mathrm{nz}}=|\Omega_{\mathrm{nz}}|/M$.
  The superscript $\mathrm{nz}$ indicates that the identity Pauli pair is omitted. The first quantity is large
  where target weight is concentrated within compatible groups, whereas the
  second records the support-to-group-count reduction independently of weight
  concentration.
  }
  \label{fig:fsim_theory_landscape}
\end{figure*}

To illustrate the advantage of joint grouping and its dependency on the target unitary, we propose a set of benchmarks over continuously parameterized two-qubit gates that are natively available on superconducting quantum hardware. Such operations, denoted recently as fractional gates~\cite{ibm_fractional_gates}, have been used in the context of utility-scale Hamiltonian-simulation workflows~\cite{acharya_quantum_2025,qedma_qesem_2026}. They are particularly suitable for reducing circuit depth as they provide more flexibility to derive efficient unitary synthesis. However, generic fractional-angle gates do not generally admit the standard Clifford-based benchmarking simplifications. This motivates direct characterization protocols that remain efficient for arbitrary input-output Pauli pairs while preserving the circuit context of the target gate.

We use the excitation-preserving family
\begin{equation}
  \mathrm{fSim}(\theta,\varphi)
  = \mathrm{CP}(-\varphi)\,R_{XX}(\theta)R_{YY}(\theta),
  \label{eq:fsim_definition}
\end{equation}
as our main numerical example. This family continuously interpolates between familiar gates including $\sqrt{\mathrm{iSWAP}} = \mathrm{fSim}(\pi/4,0)$ and iSWAP at $\mathrm{fSim}(\pi/2,0)$ up to local phases, as well as CZ $=\mathrm{fSim}(0,\pi)$ exactly, while also containing structured non-Clifford targets. This allows us to compare three regimes using the same gate family: a
near-Clifford reference, a point chosen from the grouping-advantage landscape, and
a generic non-Clifford point for which the improvement remains significant.

We first map the predicted shot advantage over the full two-angle fSim landscape.
Panel~(a) of Fig.~\ref{fig:fsim_theory_landscape} uses the same $10^{-4}$
Pauli-transfer-coefficient cutoff and greedy bilateral-QWC ordering as the
estimator. Thus the displayed ratios are computed on the truncated support;
Appendix~\ref{app:truncation_bias} gives the resulting bias bound relative to
the untruncated fidelity. The color scale shows the leading inner-shot
advantage,
\begin{equation}
  I_G^{\mathrm{nz}}
  =\frac{|\Omega_{\mathrm{nz}}|}{\sum_m\kappa_m},
  \label{eq:fsim_predicted_shot_advantage}
\end{equation}
where $\Omega_{\mathrm{nz}}:=\Omega\setminus\{(I,I)\}$ since the identity coefficient can be treated separately in post-processing, i.e. its estimate is trivial and does not require actual shots. This quantity is the ratio between the target-dependent channel-use terms of the standard and grouped protocols. Across the parameter sweep, it ranges
from $1$ to $1.43$, with the largest values forming ridges where the target
weights are concentrated within compatible groups. By contrast, the broad central band remains
close to one, showing that the gate not being Clifford alone does not imply a shot-budget
advantage.

Panel~(b) of Fig.~\ref{fig:fsim_theory_landscape} therefore reports
the theoretical compression factor
\begin{equation}
  R_C^{\mathrm{nz}}=\frac{|\Omega_{\mathrm{nz}}|}{M},
  \label{eq:fsim_predicted_config_compression}
\end{equation}
where the numerator is the number of supported non-identity Pauli pairs and
$M$ is the number of QWC groups. This factor ranges from
$1.67$ at the Clifford corners up to $2.70$ over the grid, with a broad plateau close to $2.5$. Along $\theta=0$ with $0<\varphi<\pi$, the counts are
$(|\Omega_{\mathrm{nz}}|,M)=(51,25)$, while along $\varphi=\pi$ with
$0<\theta<\pi/2$ they are $(35,19)$. Although the coefficients vary continuously along
these lines, the support and compatibility graph remain unchanged.
At the endpoints, the support changes and the compression factor returns to its
Clifford value.

\subsection{Fixed-point estimator comparison with and without readout noise}
\label{subsec:fsim_fixed_point_estimator}

The landscape prediction is tested directly at the representative point
$(\theta,\varphi)=(\pi/8,\pi/8)$. For the same fSim channel, we evaluate the grouped and standard
estimators over a range of requested precisions $\varepsilon$, both with emulated
ideal readout and with a symmetric $5\%$ readout error followed by
linear-inversion REM (see Appendix \ref{app:linear_inversion_rem} for a detailed explanation).
The failure probability is fixed to $\delta=0.1$, and the outer Pauli-sampling ceiling is
$\lceil 1/(\varepsilon^2\delta)\rceil$. The resulting protocol-shot total (one protocol shot = one use of the channel $\mathcal{E}$) is then determined by the
strategy-specific allocation over the sampled DFE configurations. Because the
same cutoff is used in these allocations, their error relative to the exact
untruncated fidelity includes any truncation bias.

\begin{figure*}[t]
  \centering
  \includegraphics[width=\textwidth]{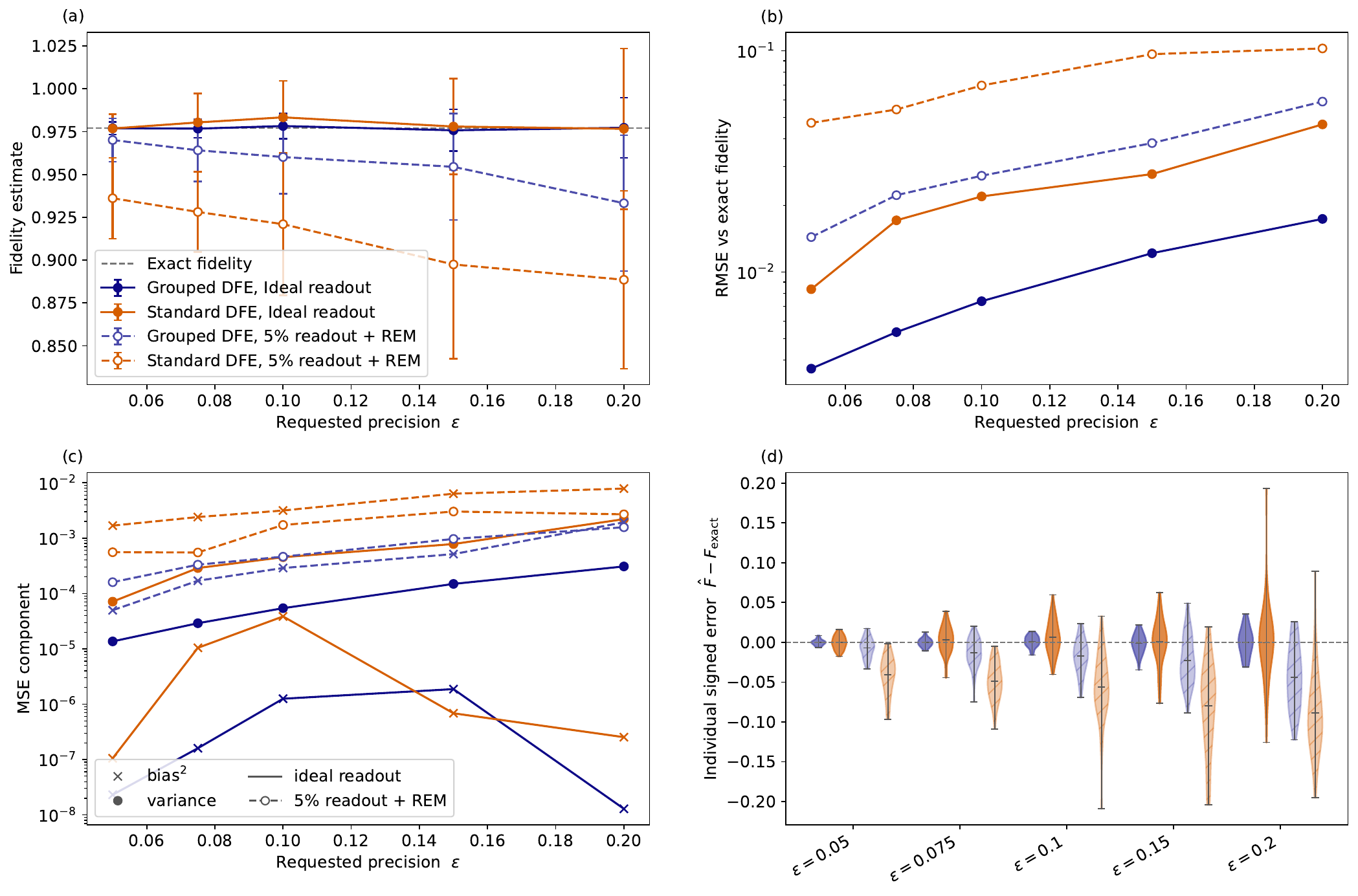}
  \caption{
  Fixed-point channel-fidelity estimation for $\mathrm{fSim}(\pi/8,\pi/8)$ under ideal readout and
  symmetric $5\%$ readout error followed by linear-inversion REM. We set $\delta=0.1$ and use 50
  independent trials at each requested precision. Indigo-blue and orange denote grouped and standard DFE,
  respectively; solid, filled marks indicate ideal readout, while dashed, open marks and light-colored
  violins indicate readout noise plus REM. (a) The trial-mean fidelity estimate with one-standard-
  deviation intervals and the exact fidelity. (b) The corresponding RMSE, which combines residual
  bias and finite-shot fluctuations. (c) Its decomposition into squared bias (crosses) and variance
  (circles). (d) The individual signed errors $\hat F-F_{\mathrm{exact}}$ from the trials summarized
  in (a). Panel (d) is the trial-level distribution underlying the mean and spread in (a): its location
  reveals residual bias, while its width, asymmetry, and tails reveal variability that panel (a)'s mean and
  one-standard-deviation interval do not resolve. For every $\varepsilon$,
  the outer sampling ceiling is $\lceil 1/(\varepsilon^2\delta)\rceil$; the displayed estimates use the
  resulting strategy-specific protocol allocations.
  }
  \label{fig:fsim_fixed_point_estimator}
\end{figure*}

The fixed-point results in Fig.~\ref{fig:fsim_fixed_point_estimator} follow the analytic prediction. Both protocols receive the same requested $(\varepsilon,\delta)$ parameters, but their theorem-derived allocations differ. Under ideal readout, the grouped estimator has lower Root Mean Squared Error (RMSE) across the full sweep, with panel~(c) showing that improvement is primarily due to a reduction in variance rather than a shift in mean. In the presence of readout noise, REM moves both estimators closer to the exact fidelity as the requested precision is increased, while the grouped estimator retains the lower RMSE over the displayed range of resulting channel-use budgets. This is an estimator-level comparison at common requested precision and confidence parameters, rather than an equal-total-execution-budget comparison.

\subsection{Inverse-free context-preserving repeated-cycle estimation}
\label{subsec:context_preserving_cycle_estimation}

Many fidelity-estimation protocols benchmark a gate in isolation. A
context-aware objective instead evaluates it in a designated execution
landscape, for example within a layer of parallel gates on neighboring
qubits. Such layer characterization can retain correlated error contributions
while preserving a local figure of merit. Randomized benchmarking can include
these contributions~\cite{mckay2023benchmarkingquantumprocessorperformance},
but reports them after averaging as part of an overall infidelity. Context-Aware
Fidelity Estimation (CAFE)~\cite{debroy2023context} takes a complementary
coherent-amplification approach, retaining the selected circuit window so that
its depth dependence can be related to the physical errors of that window.

\begin{figure*}[t]
  \centering
  \begin{minipage}[t]{0.47\textwidth}
    \centering
    \textnormal{(a) DFE}\\[-1mm]
    \[
    \begin{quantikz}[column sep=0.27cm, row sep=0.18cm]
      \lstick{$q_1$} & \gate{\ket{\psi_{\beta_1}}} &
      \gate[wires=2,style={draw=bluepigment,fill=blue!8}]{\mathcal E_{\mathrm{win}}^{\times n}} &
      \gate{R_{\alpha_1}} & \meter{} \\
      \lstick{$q_2$} & \gate{\ket{\psi_{\beta_2}}} & \qw &
      \gate{R_{\alpha_2}} & \meter{}
    \end{quantikz}
    \]
    \scriptsize Product Pauli preparation \quad $\longrightarrow$ \quad local basis rotations
  \end{minipage}\hfill
  \begin{minipage}[t]{0.47\textwidth}
    \centering
    \textnormal{(b) CAFE}\\[-1mm]
    \[
    \begin{quantikz}[column sep=0.24cm, row sep=0.32cm]
      \lstick{$q_1$} &
      \gate[wires=2,style={draw=orange,fill=orange!10}]{\substack{V_\psi,\;\ket{\psi}=V_\psi\ket{00}\\ \psi\sim\mathcal D_2}} &
      \gate[wires=2,style={draw=bluepigment,fill=blue!8}]{\mathcal E_{\mathrm{win}}^{\times n}} &
      \gate[wires=2,style={draw=red,fill=red!8}]{(\mathcal U_{\mathrm{win}}^n V_\psi)^\dagger} & \meter{} \\
      \lstick{$q_2$} & \qw & \qw & \qw & \meter{}
    \end{quantikz}
    \]
    \scriptsize Generally entangling two-design preparation \quad $\longrightarrow$ \quad generally entangling inverse
  \end{minipage}
  \caption{Circuit templates for estimating the fidelity of the same repeated
  implemented window $\mathcal E_{\mathrm{win}}^{\times n}$ (two wires shown
  schematically). (a) DFE uses local product Pauli-eigenstate preparations
  $\ket{\psi_{\beta_j}}$ and local rotations $R_{\alpha_j}$ into the output
  Pauli measurement bases. Its ideal reference determines the classical
  sampling distribution but is not applied as a circuit operation. (b) CAFE
  prepares $\ket{\psi}=V_\psi\ket{00}$ from a two-design $\mathcal D_2$ and
  appends the reference inverse $(\mathcal U_{\mathrm{win}}^nV_\psi)^\dagger$
  before the survival measurement. For a multi-qubit window, both the
  two-design preparation and the reference inverse are generally entangling, so
  their implementation can contribute error in addition to the forward window
  being assessed.}
  \label{fig:dfe_cafe_templates}
\end{figure*}

This setting is also natural for DFE, whose local Pauli-eigenstate
preparations and measurements leave the interior of the circuit under study
unchanged. We therefore compare it with the original CAFE construction as an
alternative estimator for the same context-aware repeated-cycle fidelity.
For a fixed implemented window $\mathcal E_{\mathrm{win}}$ and its ideal
reference $\mathcal U_{\mathrm{win}}$, we estimate at each repeated depth
\begin{equation}
  F^{(n)} = F\!\left(
      \mathcal E_{\mathrm{win}}^n,\mathcal U_{\mathrm{win}}^n\right).
  \label{eq:context_cycle_fidelity}
\end{equation}
Repeating the untwirled window amplifies its fixed context-dependent error
signature, from which a per-cycle infidelity can be inferred by fitting the
depth dependence. CAFE introduced a concrete way to obtain this signal:
it samples a reference state $\ket{\psi}=V_\psi\ket{0}^{\otimes n}$ from a
unitary two-design, applies the implemented window $n$ times, and appends the
reference uncompute $(\mathcal U_{\mathrm{win}}^nV_\psi)^\dagger$ before a
survival measurement~\cite{debroy2023context}. Averaging these survival
probabilities over the reference ensemble yields the depth-dependent CAFE
estimate.

DFE targets the same repeated-window fidelity through a different measurement construction. It prepares product Pauli eigenstates and measures in local Pauli bases. The ideal reference is used only to construct the classical Pauli-transfer sampling plan. No two-design state preparation and no physical inverse of $\mathcal U_{\mathrm{win}}^n$ are required. Figure~\ref{fig:dfe_cafe_templates} compares the two circuit templates.

For an already well-characterized entangling gate, the auxiliary entangling
operations in CAFE can be an acceptable part of the estimator. In a calibration
setting, however, their error can be difficult to distinguish from the error of
the forward window: the same two-qubit primitives can occur in the two-design
preparation and in the reference inverse. This is most consequential when those
primitives are precisely the operations being tuned, or are not yet known to
have sufficiently high fidelity. DFE avoids this bootstrap contribution. Its
boundary operations are local single-qubit primitives, for which very high
fidelities have been demonstrated in superconducting and semiconductor-qubit
platforms~\cite{li2023error,takeda2026single} and are traditionally much easier to obtain in an experiment. This does not make DFE
SPAM-free, but it confines these additional circuit contributions to local
operations.

The inverse-free construction is also operationally useful for continuously
parameterized windows. For CAFE, each tested parameter vector and repetition
depth requires a corresponding reference inverse, and hence an entangling
reference configuration to compile, validate, and load. For DFE, changing the
target parameters updates the classical sampling distribution while retaining
the local preparation--measurement template. This makes it natural to benchmark
a gate family or a parameterized circuit layer under fixed context, rather than
to construct a separate inverse-based experiment for every isolated gate point and number of repetitions.

For each independent shot-sampling trial and each protocol, we include the zero-depth
reference point and fit the common low-depth form used in CAFE,
\begin{align}
 \widehat{F}^{(n)}
 &= 1-\epsilon_{\mathrm{SPAM}}-n\epsilon_{\mathrm{lin}}
   -n^2\epsilon_{\mathrm{quad}} .
 \label{eq:cafe_dfe_low_depth_fit}
\end{align}
Here $\epsilon_{\mathrm{SPAM}}$ is the depth-independent error isolated by
$n=0$, while $\epsilon_{\mathrm{lin}}$ captures the leading accumulation of
incoherent error probabilities and a small systematic coherent error gives the
quadratic term $\epsilon_{\mathrm{quad}}$. This is a low-depth descriptive
model, rather than an assumption of a single decay law at arbitrary depth. We
apply the same normalization to every protocol and report its fitted one-cycle
fidelity as
\begin{equation}
 \widehat F_{1}
 = \frac{\widehat{F}^{(1)}}{\widehat{ F}^{(0)}}
 = \frac{1-\epsilon_{\mathrm{SPAM}}-\epsilon_{\mathrm{lin}}
 -\epsilon_{\mathrm{quad}}}{1-\epsilon_{\mathrm{SPAM}}}.
 \label{eq:cafe_dfe_cycle_fidelity}
\end{equation}

\begin{figure*}[t]
  \centering
  \includegraphics[width=\textwidth]{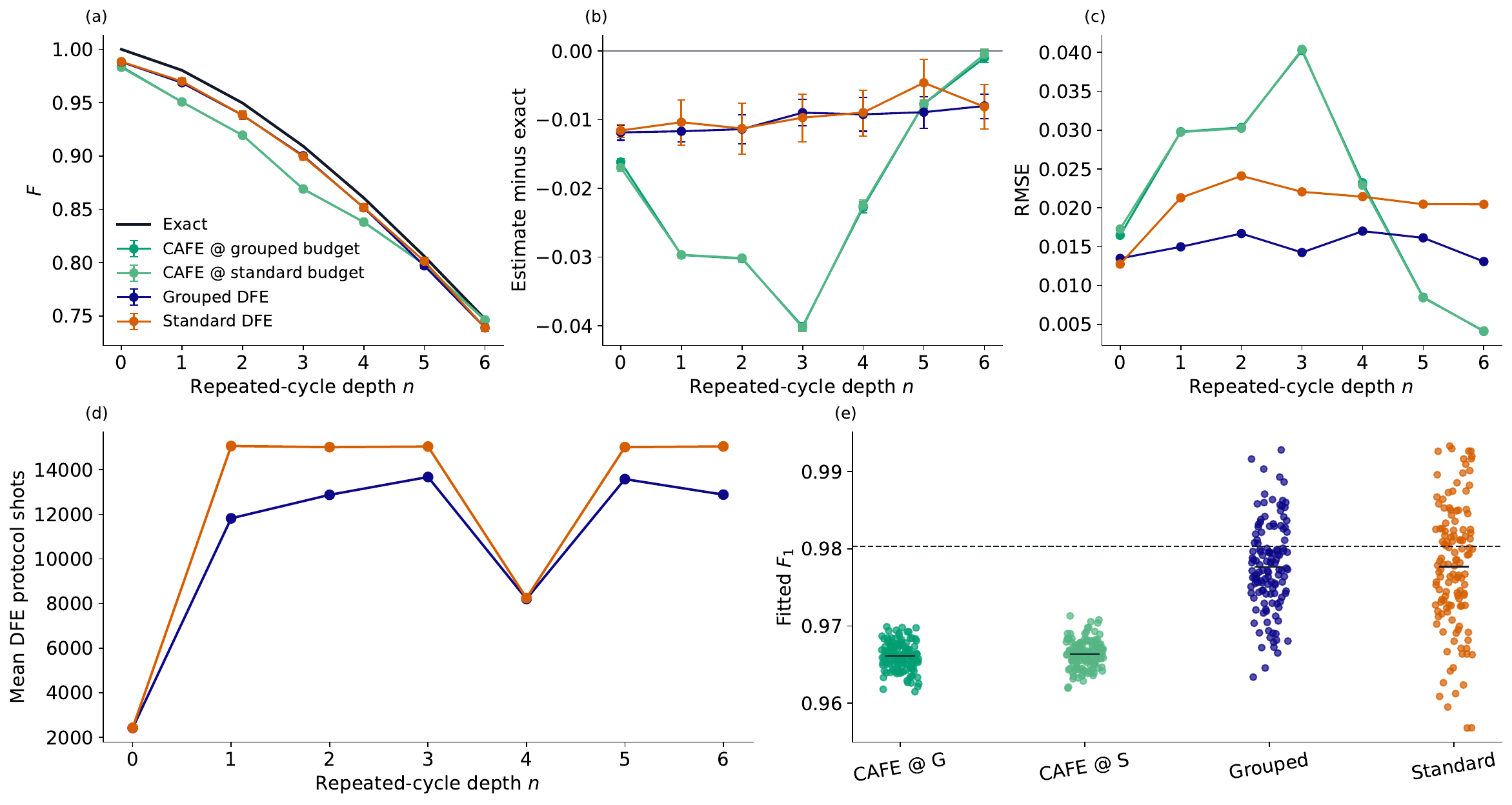}
  \caption{Fixed-context repeated-cycle estimation for
  $\mathrm{fSim}(\pi/8,\pi/8)$ at every depth $n=0,\ldots,6$, over 128
  independent shot-sampling trials. For all estimates, $(\epsilon, \delta)=(0.05, 0.1)$.
  (a) Mean depth-$n$ fidelity estimates with pointwise $95\%$
  percentile-bootstrap confidence intervals for the across-trial mean
  and the exact forward-cycle fidelity. (b) Signed estimator error and
  intervals. (c) RMSE. (d) The native grouped and standard DFE
  protocol-shot allocations; the CAFE curves are omitted because they are set
  to the realized protocol-shot budget of their
  matched DFE counterpart at every depth in each trial. (e) Cycle fidelity
  inferred by fitting each trial across the independently estimated depths
  with Eqs.~\eqref{eq:cafe_dfe_low_depth_fit} and
  \eqref{eq:cafe_dfe_cycle_fidelity}. All protocols use the common sampled
  Qiskit Aer noise model without readout-error mitigation, detailed in
  Appendix~\ref{app:simulator_noise_model}. At this selected fSim point,
  CAFE has lower trial-to-trial variability and fitted one-cycle RMSE than DFE
  at both matched budgets.}
  \label{fig:fsim_context_cafe_dfe}
\end{figure*}

Figure~\ref{fig:fsim_context_cafe_dfe} makes this comparison at the favorable
grouping point $\mathrm{fSim}(\pi/8,\pi/8)$ over depths $n=0,\ldots,6$.
At every depth, CAFE is given the same realized protocol budget as either
grouped or standard DFE, while the DFE protocols retain their native
theorem-derived allocations with $(\varepsilon,\delta)=(0.05,0.1)$. The
protocol-shot panel therefore shows only the two DFE allocations. This one comparison includes every
noise source in Table~\ref{tab:repeated_cycle_noise_model}. The plotted pointwise $95\%$
percentile-bootstrap confidence intervals quantify uncertainty in the
across-trial mean conditional on the fixed fSim target and Qiskit Aer noise
model. The selected zero-depth point identifies the estimator-specific baseline
before fitting the shared depth dependence. The mean fitted
$\epsilon_{\mathrm{SPAM}}$ is $1.59\times10^{-2}$ for CAFE at the grouped
budget, compared with $1.06\times10^{-2}$ for grouped DFE. This difference is
consistent with the two-design preparation and reference inverse contributing
their own noisy operations even at zero forward-window repetitions. After the
common normalization in Eq.~\eqref{eq:cafe_dfe_cycle_fidelity}, CAFE has lower
trial-to-trial standard deviation for the fitted one-cycle fidelity, but a
larger bias and hence a larger RMSE. At the grouped budget, the standard
deviation is $1.81\times10^{-3}$ for CAFE and $5.70\times10^{-3}$ for grouped
DFE, while the corresponding RMSE values are $1.43\times10^{-2}$ and
$6.27\times10^{-3}$. At the standard budget, the standard deviations are
$1.82\times10^{-3}$ and $8.15\times10^{-3}$, and the RMSE values are
$1.41\times10^{-2}$ and $8.54\times10^{-3}$. Thus CAFE concentrates the
fitted values more tightly, whereas grouped DFE gives the most accurate
fitted one-cycle fidelity at the matched grouped budget. These values
distinguish depth-wise precision from the precision and bias of the final
fitted cycle parameter. For
reference, fitting the exact depth-$0$--$6$ forward-fidelity curve with the
same quadratic model gives $\widehat F_1=0.97795$, below its direct
$F_1=0.98032$; panel~(e) is therefore a common descriptive fit diagnostic,
not an independent exact-cycle benchmark. The exact forward-channel fidelity
remains a reference that excludes the classical readout-assignment process, which is compensated by the fact that the noise model does not emulate readout error in this benchmark.

The gain from grouping necessarily depends on the selected channel and its
Pauli-transfer structure. At the same time, DFE draws
its boundary operations from fixed local preparation- and measurement-fiducial alphabets: varying a
continuous target parameter changes the classical sampling distribution and
shot allocation, rather than requiring a new physical two-design preparation
and reference inverse. This makes DFE a flexible route to parametric
characterization, especially on modern controllers that can update parameters
and conditional control directly within the quantum-control program
\cite{ella_quantum-classical_2023}.

Moving beyond a circuit-layer setting, where the local window is logically disentangled from its parallel context, would require a process description with substantial tomographic resources. Reference~\cite{huang2025certifying} recently showed that, for almost
all target states, local single-qubit measurements can certify the target state
and subsequently predict highly non-local properties. An important open question is whether an analogous
shadow-overlap or certification construction, applied for example to an
appropriate contextual output or Choi state, can recover local information that
supports calibration-relevant metrics for a global subcircuit. Existing shadow process-tomography approaches provide complementary channel-level starting points~\cite{levy2024classical,kunjummen2023shadow,helsen2023gateset}. 

\subsection{Calibration under ideal and readout-noisy estimator signals}
\label{subsec:calibration_protocol}
\begin{figure*}[t]
  \centering
  \includegraphics[width=\textwidth]{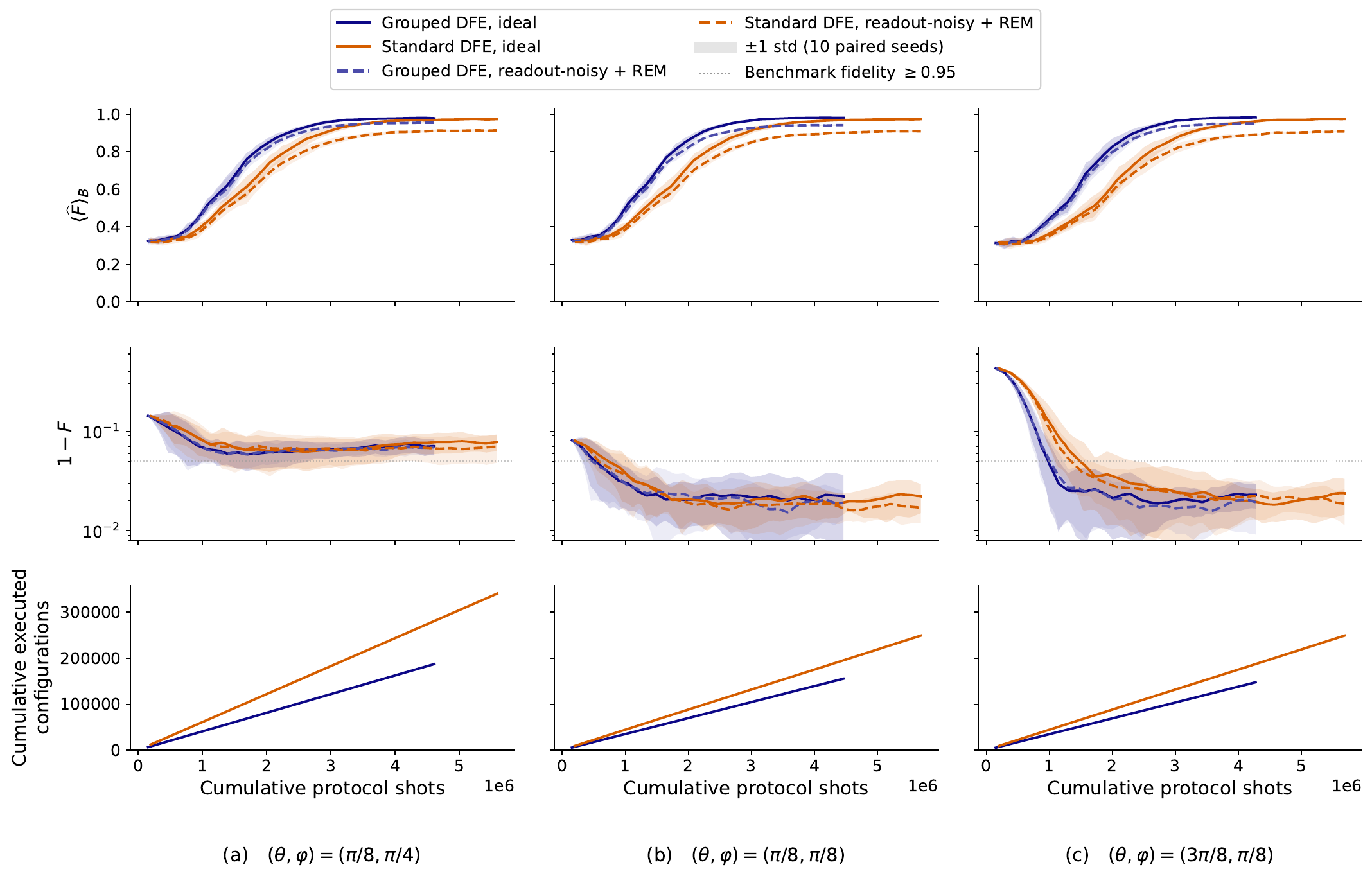}
  \caption{
  Calibration convergence for the three representative fSim targets identified at the bottom of their columns. Each column compares
  ideal readout with a symmetric $5\%$ readout error followed by linear-
  inversion REM. Solid, filled curves correspond to ideal readout, while dashed, open curves
  correspond to the readout-noisy-plus-REM condition. Blue and orange denote grouped and
  standard DFE, respectively, and shaded regions show one standard deviation
  over 10 paired seeds. The top row reports the batch-mean estimated fidelity used
  for the parameter update. The middle row gives the independently evaluated
  noiseless infidelity of the mean policy parameters. The bottom row gives
  the cumulative number of distinct execution configurations, summed over the
  action batch, as proxy for compilation and circuit-loading overhead. The configuration
  counts are identical for both readout error conditions, while the solid and dashed curves
  distinguish their estimator signals in the first two rows. Readout noise has a stronger effect on the estimated reward trajectory than on the final mean-policy infidelity, which remains comparable between the two
  conditions.
  }
  \label{fig:fsim_rl_convergence_comparison}
\end{figure*}

We finally use the two estimators as fidelity signals within a calibration loop under
two readout conditions. The five-parameter ansatz
\begin{equation}
  \mathrm{CP}(p_3)R_{XX}(p_1)R_{YY}(p_2)
  R_Z^{(0)}(p_4)R_Z^{(1)}(p_5),
  \label{eq:fsim_calibration_ansatz}
\end{equation}
can represent the target fSim gate while also allowing for local phase corrections. Following the model-free, measurement-driven control approach of Ref.~\cite{Sivak_2022}, a Proximal Policy Optimization (PPO)
agent optimizes these parameters using matched seeds, DFE precision, action
batches, and protocol-level budgets.
Details of the PPO algorithm and the training hyperparameters are provided in
Appendix~\ref{app:reinforcement_learning_calibration}.
Figure~\ref{fig:fsim_rl_convergence_comparison}
compares ideal readout with a symmetric $5\%$ readout noise followed by linear-inversion REM, using the
same cumulative protocol-shot axis. The top row shows the batch-averaged DFE
reward used by the optimizer, while the middle row reports the independently
evaluated noiseless infidelity of the mean policy parameters. These quantities can differ substantially at early stages of the training. The Gaussian policy initially has a broad
standard-deviation vector, so the $B$ actions sampled in each update probe appreciably
different circuits, whose DFE estimates are averaged into one reward signal. By contrast, the middle-row benchmark evaluates only the mean action. As the policy
contracts, the sampled actions become concentrated around the mean, and the batch-averaged reward becomes a more accurate estimate of its fidelity. Repeated evaluations in this increasingly local region also improve the statistical resolution near the optimum.

Each of the $B$ actions requires its own estimator plan, and therefore
contributes execution configurations during training. As defined in
Section~\ref{subsec:figures_of_merit}, their count is a proxy for the number of
distinct experimental instructions, rather than a direct wall-clock runtime
measurement. Such overhead is relevant because the number of circuits required
by process-fidelity optimization can itself become prohibitive~\cite{Greenaway_2021}.
The bottom row of Fig.~\ref{fig:fsim_rl_convergence_comparison} therefore shows
the cumulative number of executed configurations, summed over all actions in each batch,
for grouped and standard DFE. Grouping lowers this count by reusing compatible
bases within each action evaluation, making the compilation savings visible even when the two reward trajectories are close.

\section{Conclusion}
\label{sec:conclusion}

We introduced joint fiducial grouping for direct channel fidelity estimation, a
context-preserving way to estimate the entanglement fidelity of a target
channel while reusing compatible input preparations and output measurement
bases. The resulting estimator retains the finite-sample guarantees of DFE,
while separating two practically distinct resources: the number of channel
executions and the number of distinct input--output settings. Joint
grouping always compresses the latter, and it can also reduce the former when
the target Pauli-transfer weight is concentrated within compatible groups. Our fractional gate studies validate these predictions and show why the advantage is a
property of the target channel. The fixed-point REM comparison supplies
the same requested precision and confidence parameters to standard and grouped
DFE, then uses each protocol's resulting allocation. At those inputs, the
grouped estimator has lower observed RMSE, while one mitigated output
distribution is reused across compatible observables. Grouping does not make
assignment-matrix inversion intrinsically more accurate, and this result is
conditional on the grouping structure and shot-allocation rule.

These properties make grouped DFE a useful estimator-level primitive for
near-term utility-scale experiments. It can be inserted as a cost or reward
function in model-free calibration as well as other closed-loop pipelines that must evaluate a circuit many
times. In this role, the reduction in input--output settings can lower
compilation and controller-loading pressure, while favorable grouping can
provide more statistically informative fidelity feedback for a fixed protocol
budget. The fSim calibration experiments illustrate this intended use: the
estimator supplies a measurement-driven reward without requiring a trusted
noise model or replacing the circuit under study by a randomized surrogate.
The present evidence is simulation-based, so the remaining step toward
deployment is to quantify the same resource accounting and estimator error on
the relevant control stack and hardware. On controllers that use a FPGA for real-time control flow, preparation and measurement fiducials can be selected within an already loaded program through precompiled branches or runtime parameters. In this case, a separate circuit variant does not need to be loaded for every fiducial choice. The extent to which this reduces controller-loading overhead depends on the specific hardware and compilation policy, and remains to be quantified experimentally.

More broadly, the results emphasize the need for characterization methods that
preserve the operational context of a circuit. Coherent errors are not merely
properties of isolated gates: they can depend on neighboring operations,
idles, repetition depth, and crosstalk, and can therefore change when a
benchmarking protocol randomizes or otherwise modifies the circuit. A useful
tomographic objective for calibration should consequently resolve the process
as it is embedded in the circuit where it will be used, while remaining light
enough to serve inside an optimization loop. Grouped DFE is one construction
in this direction, combining that context sensitivity with an explicit and
auditable resource trade-off.

An open direction is the connection between deterministic grouping and shadow-based process characterization~\cite{levy2024classical,kunjummen2023shadow,helsen2023gateset}. Classical shadows of channels allow many fidelity queries to be answered from a single randomized data set, but under local measurements their sample complexity can scale less favorably with system size than standard DFE because it is controlled by the corresponding shadow norms. Derandomization results~\cite{huang2021efficient} suggest that randomized shadow protocols and deterministic grouping may be viewed as two ends of a broader family of measurement strategies. Calibration provides a natural intermediate regime: the target changes only slightly between iterations, so grouped configurations could be reused across several nearby fidelity queries. A target-biased, grouped analogue of process shadows could therefore combine the reusability of shadow data with the Rényi-$1/2$ effective-support scaling derived here. A complementary extension would be to allow overlapping groups, which could further reduce the sampling cost~\cite{wu2023overlapped}.

\begin{acknowledgments}
This project has been supported by the Government of Spain (Severo Ochoa CEX2019-000910-S and FUNQIP), Fundació Cellex, Fundació Mir-Puig, Generalitat de Catalunya (CERCA program). J.B.R. has received funding from the “Secretaria d’Universitats i Recerca del Departament de Recerca i Universitats de la Generalitat de Catalunya” under grant FI-3 00096, as well as the European Social Fund Plus.
A.S. has been supported by the National Research Foundation, Singapore through the National Quantum Office, hosted in A*STAR, under its Centre for Quantum Technologies Funding Initiative (S24Q2d0009).
J.B.R. thanks Pr.\ Antonio Acín for useful conversations. A.S.\ acknowledges useful discussions with Pr.\ Hui Khoon Ng on this project, as well as key contributions to the software stack required to run the machine-learning driven calibration experiments from Aniket Chatterjee and Lukas Voss.
The theoretical content was developed without the use of paid LLMs. The numerical results were obtained from a software codebase constructed by A.S.
Testing scripts and use cases were written with OpenAI ChatGPT~5.6 models and subsequently reviewed by the authors.
The same tool was used for minor editing of the main text; the authors have reviewed the manuscript and take full responsibility for its content.
\end{acknowledgments}

\bibliography{bibliography}
\appendix

\section{Input--output setting complexity}
\label{app:configuration_complexity}

The most direct resource reduced by joint input-output grouping is the number of distinct input--output settings. We first count the planned settings before optional deduplication of identical basis pairs. Without grouping, each supported Pauli pair requires one preparation basis and one measurement basis, so $|\Omega|$ planned settings are needed. Therefore,
\begin{equation}
    C_{\mathrm{ungrouped}}=|\Omega| .
    \label{eq:C_ungrouped}
\end{equation}
Here $C$ is the planned input--output setting count used in
Section~\ref{subsec:figures_of_merit}; it does not include the individual input
eigenstates sampled within a fixed input basis. With grouping, each group $G_m$
requires one common input basis and one common output basis, so 
\begin{equation}
    C_{\mathrm{grouped}}=M,
    \label{eq:C_grouped}
\end{equation}
where $M$ is the total number of commuting groups. A finite sampled schedule
can use a strict subset of these settings, and setting-level deduplication can
reduce the realized count further. The input--output setting compression factor
of the planned grouping $\mathcal{G}$ is
\begin{equation}
    R_C(\mathcal G)
    :=\frac{C_{\mathrm{ungrouped}}}{C_{\mathrm{grouped}}}
    =\frac{|\Omega|}{M} .
    \label{eq:configuration_compression}
\end{equation}
This is the clearest guaranteed advantage of the protocol. Even when the worst-case shot bound is unchanged, grouping can substantially reduce compilation overhead, circuit loading, and the number of distinct state-preparation and measurement settings used in a calibration loop.

Beyond shot economy, configuration compression affects the classical-control workflow. In a conventional arbitrary-waveform-generator (AWG) stack, each distinct circuit configuration is compiled into a complete multi-qubit pulse program: the AWG stores precomputed waveform samples and the associated timing/sequencing instructions. Since the global timing has already been resolved, the conventional workflow considered here loads that full program for each configuration rather than dispatching reusable per-qubit subsequences at run time. When calibrating two-qubit gates in parallel across a chip, the number of programs to compile and load consequently scales as $|\Omega| \times N_{\text{pairs}}$ without grouping. 
Grouping reduces this count to $M \times N_{\text{pairs}}$, reducing program-management and loading work; the resulting wall-clock effect is hardware-dependent and is not measured here. This burden can be amplified in randomized protocols such as randomized benchmarking, where the number of circuit instances already grows with sequence length and the number of random Clifford samples. The AWG scaling is not universal: a FPGA-based controller can execute some conditional or parameter-selection logic in real time. If the required basis changes are available as precompiled branches or runtime parameters, the controller can select them within an already loaded program rather than loading a new full-circuit variant. In that setting, configuration compression still reduces the number of distinct experiment descriptions, but need not translate into the same upload-count scaling; the outcome depends on the controller capabilities and compilation policy.

\section{Truncation and bias}
\label{app:truncation_bias}

For non-Clifford channels, the support $\Omega$ can be large, and small target coefficients contribute negligibly to the fidelity. In practice, one may retain only a subset of important Pauli pairs. Define the truncated support:
\begin{equation}\label{eq:truncation-set}
    \Omega_\tau := \left\{(\alpha, \beta) : | \chi_\mathcal{U}(\alpha, \beta)| \geq\tau\right\},
\end{equation}
for some threshold $\tau \geq 0$ (so that $\Omega_\tau \subseteq \Omega$), and let $\mathcal{G}_\tau$ be the grouping restricted to $\Omega_\tau$. Then, the truncated fidelity is
\begin{equation}
    F_{\tau}(\mathcal U,\mathcal E)
    :=\frac{1}{d^2}\sum_{(\alpha,\beta)\in\Omega_\tau}
    \chi_\mathcal{U}(\alpha,\beta)\chi_{\mathcal E}(\alpha,\beta).
    \label{eq:truncated_fidelity}
\end{equation}
The same grouped estimator applies after replacing $\Omega$ by $\Omega_\tau$, so we form the groups on the set defined in Eq.~\eqref{eq:truncation-set}. In that case, the group probabilities are normalized by the retained weight
\begin{equation}
    p_m^{(\tau)}=\frac{\|\mathbf u_m\|_2^2}{\sum_{(\alpha,\beta)\in \Omega_\tau}\chi_\mathcal{U}(\alpha,\beta)^2} .
    \label{eq:truncated_probabilities}
\end{equation}

Since these probabilities are normalized by the retained weight $W_\tau:= \sum_{(\alpha,\beta)\in\Omega_\tau}\chi_\mathcal{U}(\alpha,\beta)^2$ rather than by $d^2$, the final empirical average must be rescaled by $W_\tau/d^2$. This follows from
\begin{align}
    \mathbb{E}_m[X_m] &= \sum_m \frac{\|\mathbf u_m\|_2^2}{W_\tau}\frac{\mathbf{u}_m\cdot \mathbf{e}_m}{\|\mathbf u_m\|_2^2} = \frac{1}{W_\tau}\sum_m \mathbf{u}_m\cdot \mathbf{e}_m \nonumber \\
    & = \frac{1}{W_\tau}\sum_{(\alpha,\beta)\in\Omega_\tau} \chi_\mathcal{U}(\alpha,\beta)\chi_{\mathcal E}(\alpha,\beta).
\end{align}
Because $F_\tau$ is defined in Eq.~\eqref{eq:truncated_fidelity}, $\mathbb{E}_m[X_m] = \frac{d^2}{W_\tau}F_\tau$. After rescaling, the estimator $Y^\tau$ satisfies $\mathbb{E}[Y^\tau]=F_\tau$. With respect to the untruncated fidelity, it is biased, with 
\begin{equation}
    |F_\tau -F_e(\mathcal{U},\mathcal{E})|
    =
    \left |\frac{1}{d^2}\sum_{(\alpha,\beta)\notin \Omega_\tau}
    \chi_\mathcal{U}(\alpha,\beta) \chi_\mathcal{E}(\alpha,\beta)\right |.
    \label{eq:truncation_l1_bias}
\end{equation}
By Cauchy--Schwarz we can write
\begin{align}
     |F_\tau -F_e(\mathcal{U},\mathcal{E})|
    &\leq
    \frac{1}{d^2}
    \sqrt{
    \sum_{(\alpha,\beta)\notin\Omega_\tau}
    \chi_\mathcal{U}(\alpha,\beta)^2} \nonumber \\
    &\times \sqrt{
    \sum_{(\alpha,\beta)\notin\Omega_\tau}
    \chi_\mathcal{E}(\alpha,\beta)^2}.
    \label{eq:truncation_l2_bias}
\end{align}
We define the discarded target-weight fraction as $W_\mathrm{disc}:= \sum_{(\alpha,\beta)\notin\Omega_\tau}
    \chi_\mathcal{U}(\alpha,\beta)^2 /d^2$. Therefore, using that $\sum_{\alpha,\beta}\chi_\mathcal{E}(\alpha,\beta)^2\leq d^2$ gives 
\begin{equation}
    |F_\tau -F_e(\mathcal{U},\mathcal{E})|\leq \sqrt{W_\mathrm{disc}}.
\end{equation}
This makes $W_\mathrm{disc}$ a simple, computable diagnostic for the truncation bias. Since the target channel is known, $W_\mathrm{disc}$ can be evaluated directly as a function of the truncation threshold $\tau$,
\begin{equation}
    W_\mathrm{disc} (\tau)= \frac{\sum_{|\chi_\mathcal{U}(\alpha,\beta)|<\tau}
    \chi_\mathcal{U}(\alpha,\beta)^2}{d^2} \leq \frac{N_\mathrm{disc}\tau^2}{d^2} \leq d^2\tau^2,
\end{equation}
given that $N_\mathrm{disc}$ is defined as the number of discarded coefficients, which is bounded by $d^4$.

\section{Linear-inversion readout-error mitigation}
\label{app:linear_inversion_rem}

Here we recall the linear-inversion readout-error mitigation (REM) used in the
numerical comparisons. For a fixed group configuration $m$, let
$\mathbf p_m$ denote the ideal output distribution in the common measurement
basis and let $\mathbf q_m$ be the observed distribution. A calibrated assignment
matrix $A_m$ relates them by
\begin{equation}
    \mathbf q_m=A_m\mathbf p_m,
    \qquad
    \widehat{\mathbf p}_m^{\,\mathrm{REM}}
    =A_m^{-1}\widehat{\mathbf q}_m,
    \label{eq:linear_inversion_rem}
\end{equation}
provided $A_m$ is invertible. In the symmetric local-flip model used in the
figures, this matrix factorizes into one-qubit assignment matrices. If
$\boldsymbol\mu_{m,l}=(\mu^{(1)}_{m,l},\ldots,\mu^{(d)}_{m,l})^{\mathsf T}$ collects
the output-Pauli eigenvalues, the corrected estimate of the corresponding
expectation value is
\begin{equation}
    \widehat{\langle P_{\alpha_{m,l}}\rangle}_{\!\mathrm{REM}}
    =\boldsymbol\mu_{m,l}^{\mathsf T}A_m^{-1}
      \widehat{\mathbf q}_m .
    \label{eq:rem_corrected_expectation}
\end{equation}
Equivalently, at the single-shot level the raw parity in
Eq.~\eqref{eq:B_m_l_j} is replaced by
\begin{equation}
    \widetilde B_{m,l}^{(j)}
    =\lambda_{m,l}^{(b_j)}\boldsymbol\mu_{m,l}^{\mathsf T}
      A_m^{-1}\mathbf e_{a_j},
    \label{eq:rem_corrected_parity}
\end{equation}
where $\mathbf e_{a_j}$ is the one-hot vector for the observed output label.
The corrected variables are unbiased when $A_m$ is known exactly, but they need
not lie in $[-1,1]$; the inverse therefore removes readout bias at the cost of
an amplification of finite-shot fluctuations. A convenient worst-case
amplification factor for this Pauli is
$a_{m,l}=\lVert A_m^{-\mathsf T}\boldsymbol\mu_{m,l}\rVert_\infty$.
The grouped estimator is formed
as before by replacing $B_{m,l}^{(j)}$ with $\widetilde B_{m,l}^{(j)}$ in
$C_m^{(j)}$ and in the subsequent empirical aggregation of Eq.~\eqref{eq:hoeff}. Accordingly, the Hoeffding bound of Eq.~\eqref{eq:hoeff} and the allocation of Eq.~\eqref{eq:sm} continue to hold with $\lVert\mathbf{u}_m\rVert_1$ replaced by $\sum_l |u_{m,l}|a_{m,l}$.

Grouping also changes how this finite-shot cost is distributed. One sampled shot
from a grouped configuration contributes to the same empirical output
distribution (and its linear-inversion correction) for every compatible Pauli in
$G_m$. Under a fixed total
protocol-level circuit budget, the budget can consequently be concentrated on
fewer input--output settings than in the ungrouped protocol. This re-use can
amortize the sampling fluctuations introduced by $A_m^{-1}$ and explains the
observed per-configuration shot concentration. It does not make the assignment
matrix or its inverse intrinsically more accurate: the effect is conditional on
the grouping, shot-allocation rule, and equal-budget comparison, and can be
absent for flat or otherwise unfavorable groups.

\section{Repeated-cycle simulator noise model}
\label{app:simulator_noise_model}

Table~\ref{tab:repeated_cycle_noise_model} specifies the controlled simulator
model used for the fSim repeated-cycle comparison.
We construct a noise model in Qiskit Aer~\cite{qiskit_aer}
over the stated native gate basis. Gate-local thermal-relaxation channels are
attached directly to the physical $R_X$, $R_Y$, and $R_{ZZ}$
instructions; the two-qubit channel is the tensor product of the corresponding
single-qubit relaxation channels. Coherent residual unitaries are appended to
the same native instructions. The final readout model consists of two local
classical assignment channels, each attached to its physical one-qubit
\texttt{measure} instruction.

All protocol estimates are formed from sampled measurements. CAFE uses the
Qiskit Aer Sampler primitive directly, while DFE uses a customized Estimator primitive backed by that
same sampler to reconstruct Pauli expectations from sampled counts. Readout
error mitigation is disabled in both paths. Consequently, the assignment maps
contribute to the sampled estimates and to the fitted
$\epsilon_{\mathrm{SPAM}}$, but not to the exact forward-channel fidelity
used as the numerical reference. This model is a controlled
comparative test, not a fit to a calibrated hardware device.

\begin{table*}[t]
  \centering
  \caption{Qiskit Aer noise model for the fSim repeated-cycle comparison.
  For each readout matrix, rows label the ideal bit $0,1$ and columns the
  reported bit $0,1$.}
  \label{tab:repeated_cycle_noise_model}
  \begin{tabular}{@{}lll@{}}
    \toprule
    Component & Parameter & Value \\
    \midrule
    Simulator & Backend / native quantum basis & Qiskit Aer \texttt{AerSimulator} / $\{R_X, R_Y, R_{ZZ}\}$ \\
    Thermal relaxation & $T_1$ / $T_2$ & $100~\mu\mathrm{s}$ / $80~\mu\mathrm{s}$ \\
    Gate durations & $R_X$,$R_Y$ / $R_{ZZ}$& $40~\mathrm{ns}$ / $250~\mathrm{ns}$ \\
    Virtual gate & $R_Z$& No gate-local noise \\
    Coherent two-qubit residual & After each native $R_{ZZ}$& $R_{ZZ}(0.10)$ \\
    Coherent single-qubit residual & After each physical $R_X$,$R_Y$& $R_Z(0.01)$ \\
    Readout on qubit 0 & $P(\mathrm{reported}\mid\mathrm{ideal})$ & $\begin{psmallmatrix}0.996&0.004\\0.006&0.994\end{psmallmatrix}$ \\
    Readout on qubit 1 & $P(\mathrm{reported}\mid\mathrm{ideal})$ & $\begin{psmallmatrix}0.997&0.003\\0.005&0.995\end{psmallmatrix}$ \\
    Measurement estimator & CAFE / DFE & Aer Sampler / Custom Estimator built on same sampler \\
    Readout-error mitigation & Both protocols & Disabled \\
    \bottomrule
  \end{tabular}
\end{table*}

\section{Reinforcement-Learning Calibration Framework}
\label{app:reinforcement_learning_calibration}

To evaluate the practical impact of grouped Direct Fidelity Estimation (DFE) in a calibration setting, we employ a reinforcement-learning (RL) agent based on Proximal Policy Optimization (PPO), in the model-free quantum-control setting demonstrated by Sivak \emph{et al.}~\cite{Sivak_2022}. The calibration task can be viewed as a \emph{contextual bandit}: for a given target gate configuration (context), the agent proposes a set of continuous control parameters (actions), receives a fidelity-based reward, and updates its policy accordingly. No long-horizon state evolution is considered.

Table~\ref{tab:fsim_ppo_settings} collects the PPO and DFE settings used for the fSim learning-calibration results.  They are kept explicit because the action-batch size and DFE precision determine the protocol-level budget accumulated by each policy update.

\begin{table*}[t]
  \centering
  \caption{PPO and estimator settings for the fSim learning-calibration results in Fig.~\ref{fig:fsim_rl_convergence_comparison}.  The policy acts on the five parameters of Eq.~\eqref{eq:fsim_calibration_ansatz}; the $[-1,1]$ interval denotes the normalized action domain presented to PPO.}
  \label{tab:fsim_ppo_settings}
  \begin{tabular}{lll}
    \toprule
    Category & Parameter & Value \\
    \midrule
    Policy network & Hidden layers / activation & $[64,64]$ / tanh \\
    Policy distribution & Mean / standard-deviation output & tanh / sigmoid \\
    PPO update & Optimizer / learning rate & Adam / $5\times10^{-4}$ \\
    PPO update & Epochs / minibatch size & $8$ / $16$ \\
    PPO regularization & Clip ratio / entropy coefficient & $0.2$ / $0.01$ \\
    PPO objective & Value-loss coefficient / gradient clip & $0.5$ / $0.5$ \\
    Discounting & $\gamma$ / GAE $\lambda$ & $0.99$ / $0.95$ \\
    Calibration run & PPO updates / action batch & $30$ / $50$ \\
    Calibration run & Normalized action domain & $[-1,1]^5$ \\
    DFE reward & $(\epsilon,\delta)$ / shots per sampled setting & $(0.1,0.1)$ / $1$ \\
    \bottomrule
  \end{tabular}
\end{table*}

The central object of the learning procedure is the \emph{policy} $\pi_\theta(a)$, parameterized by trainable parameters $\theta$. In our implementation, the policy is represented as a multivariate Gaussian distribution over the gate-control parameter space. At each optimization step, the policy generates a batch of candidate control parameters,
$$
a_i \sim \pi_\theta(a), \qquad i=1,\ldots,B,
$$
with $B$ action samples per update. Each sampled action corresponds
to a distinct pulse or gate configuration that is evaluated on hardware (or
simulation) using the fidelity-estimation protocol described in this work.

The resulting fidelity estimates define rewards $r_i$, which are used to construct an empirical estimate of the expected policy performance,
$$
J(\theta)=\mathbb{E}_{a\sim\pi_\theta}[r(a)].
$$
The objective of learning is to maximize this expected reward, whose gradient admits the score-function form
\begin{equation}
\nabla_\theta J(\theta)
=
\mathbb{E}_{a\sim\pi_\theta}
\left[
\nabla_\theta \log \pi_\theta(a)\, r(a)
\right],
\end{equation}
which allows the policy parameters to be updated directly from sampled rewards. PPO introduces additional stabilization mechanisms that constrain successive policy updates, improving robustness and sample efficiency in practice.

The key advantage of this approach is that calibration remains entirely model-free: no analytical description of the device noise is required. The learning procedure relies solely on experimentally measured fidelity estimates and the ability to sample and optimize a differentiable policy over the control-parameter space.

\end{document}